\newcommand{\eps}{\varepsilon}
\newcommand{\E}{{\mathbb{E}}}
\newtheorem{thm}{Theorem} 
\newtheorem{lemma}[thm]{Lemma}
\newtheorem{assumption}{Assumption}
\newtheorem{definition}{Definition}
\theoremstyle{remark}
\newtheorem{example}{Example}
\newtheorem{remark}[thm]{Remark}
\newtheorem{tab}{Table}
\begin{document}
\title{Second-Order Asymptotics of \\   Sequential Hypothesis Testing}
\author{Yonglong~Li
         $\qquad$ Vincent~Y.~F.~Tan, {\em Senior Member, IEEE}
\thanks{Y.~Li is  with the Department of Electrical and Computer Engineering,
National University of Singapore (NUS), Singapore 117583 (e-mail: elelong@nus.edu.sg).  V.~Y.~F.~Tan is  with the Department of Electrical and Computer Engineering and the Department of Mathematics, NUS, Singapore 119076 (e-mail: vtan@nus.edu.sg). } \thanks{The authors are supported by a Singapore National Research Foundation (NRF) Fellowship (R-263-000-D02-281). } \thanks{This paper was presented in part at the 2020 International Symposium on Information Theory (ISIT).}
\thanks{Copyright (c) 2017 IEEE. Personal use of this material is permitted.  However, permission to use this material for any other purposes must be obtained from the IEEE by sending a request to pubs-permissions@ieee.org.}}
\maketitle

\begin{abstract}
We consider the classical sequential binary hypothesis testing problem in which there are two hypotheses governed respectively by distributions $P_0$ and $P_1$ and we would like to decide which hypothesis is true using a sequential test. It is known from the work of Wald and Wolfowitz that as the expectation of the length of the test grows, the optimal type-I and type-II error exponents approach the relative entropies $D(P_1\|P_0)$ and $D(P_0\|P_1)$. We refine this result by considering the optimal backoff---or second-order asymptotics---from the corner point of the achievable exponent region $(D(P_1\|P_0),D(P_0\|P_1))$ under two different constraints on the length of the test (or the sample size). First, we consider a probabilistic constraint in which the probability that the length of test exceeds a prescribed integer $n$ is less than a certain threshold $0<\varepsilon <1$. Second, the expectation of the sample size is bounded by $n$. In both cases, and under mild conditions, the second-order asymptotics  is characterized exactly. Numerical examples are provided to illustrate our results.
\end{abstract}

\begin{IEEEkeywords}
Sequential hypothesis testing, Error exponents, Second-order asymptotics, Finite blocklength
\end{IEEEkeywords}

\section{Introduction}
{\em Hypothesis testing} is one of the most canonical problems in information theory and statistics. In its simplest form, hypothesis testing entails deciding between one of a number of different possible physical phenomena given a fixed number of observations. When further restricted to the binary case in which there are only {\em two} possible phenomena, this is known as {\em binary hypothesis testing}, which though simple has a plethora of applications from communication engineering in which one would like to detect the presence (or identity) of a signal from noisy measurement or radar engineering in which one would like to detect the presence of a potentially  malicious object given noisy observations from imperfect sensors. 

Abstracted into a mathematical problem, binary hypothesis testing involves two hypotheses $H_0$ and $H_1$, respectively characterized by distributions $P_0$ and $P_1$ defined on the same alphabet $\mathcal{X}$. A sample $X$ is drawn from either $P_0$ or $P_1$ and given the sample, the decision maker would like to decide using a (possibly randomized) function  $\delta:\mathcal{X}\to \{0,1\}$ whether it was drawn from $P_0$ (i.e., $H_0$ is in effect) or $P_1$ (i.e., $H_1$ is in effect). Naturally, there are two types of errors, viz.\ the type-I error (resp.\ type-II error) in which the true hypothesis is $H_0$ (resp.\ $H_1$) but the decoder $\delta$ decided based on $X$ that the true hypothesis is $H_1$ (resp.\ $H_0$). The type-I and type-II error probabilities are also known as the {\em false-alarm} and {\em mis-detection} probabilities respectively. 

A regime of interest is when the number of observations is not just one (as above) but can be large and is modelled by mathematically by a number $n$ tending to infinity. In this case, $X$ above is replaced by a vector $X^n = (X_1, \ldots, X_n)$ and a number of pertinent questions arise. What is the optimal tradeoff between the two types of error probabilities? What is the optimal performance in terms of the type-I and type-II error probabilities? The former is dictated by the Neyman-Pearson lemma~\cite{NeymanPearson} which states, roughly speaking, that the likelihood ratio test is optimal. Furthermore, according to the Chernoff-Stein lemma~\cite[Theorem 1.2]{CsiszarKorner}, if we constrain the type-I error to be $\le\varepsilon$ for some $\varepsilon\in (0,1)$, then the best (largest) exponent rate of decay of the type-II error with $n$---also known as the type-II error exponent---is the relative entropy $D(P_0 \| P_1)$. If we instead demand that the type-I error probability decays exponentially fast as $\exp(-n \eta)$ for some $\eta \in (0, D(P_1 \| P_0))$, then the  type-II  error  probability decays as $\min_{Q: D(Q \| P_1) \le \eta} D(Q\| P_0)$~\cite{CsiszarKorner, Tuncel, Blahut}. This can  also be expressed in terms of a R\'enyi divergence~\cite{Blahut}. As can be seen, there is a tradeoff between the two error exponents; both of them cannot be large at the same time. 

Somewhat surprisingly, this pesky tradeoff can be completely eradicated if we allow the length of the test (i.e., the number of observations) to be a stopping time whose {\em expectation} is bounded by $n$. In this case, Wald and Wolfowitz~\cite{WaldWolf} showed that there exists a sequence of tests---namely sequential probability ratio tests (SPRTs)---such that the  type-I and type-II error exponents {\em simulatenously} assume the extremal values  $D(P_1\| P_0)$ and $D(P_0\| P_1)$. 

\subsection{Main Contributions}
In this work, and motivated by practical applications in which the length of the test must be finite, we study the so-called second-order asymptotic regime~\cite{PPV10, Hayashi08} in which the backoff from the corner point of the achievable exponent regime  $(  D(P_1\| P_0),D(P_0\| P_1))$ as a function of the permissible average length of the test $n$ is quantified. This requirement that the stopping time's expectation is not larger than $n$ is coined the {\em expectation constraint} and the backoff we seek is quantified through the {\em second-order exponent under the expectation constraint}.    We also consider a complementary setting in which we constrain the probabilities that the event that the stopping time exceeds $n$ is no larger than some prescribed $\varepsilon\in (0,1)$. The backoff we seek here is quantified through the {\em second-order exponent under the probabilistic constraint}. Under  these constraints, and by assuming mild conditions on the distributions, we derive the backoff, or more precisely, the second-order term, by judiciously combining and utilizing classical tools from probability theory~\cite{CLTmaximal} and nonlinear renewal theory~\cite{durrettprobability, nonlinearrenewaltheory} as well as key properties of the SPRT~\cite{ferguson, SequentialAnalysis ,Lotov86,AsymptoticsSPRT}.  We notice that under the expectation constraint, the convergence to $(  D(P_1\| P_0),D(P_0\| P_1))$ is of the order $\Theta(\frac{1}{n})$, which is much faster than under the probabilistic constraint in which the convergence rate is $\Theta(\frac{1}{\sqrt{n}})$.  

Because  the second-order term under  the expectation constraint is stated rather implicitly, in  Section~\ref{sec:examples} we also suggest a numerical procedure to compute it. This procedure is applied to two examples of discriminating Gaussians and exponential distributions with different means.

\subsection{Related Works}
The literature on sequential hypothesis testing is so vast that we will not attempt to survey all relevant papers here. Rather, we will mention the most pertinent ones, split into those from the information-theoretic and statistics perspectives. 

From the information theory perspective, for the fixed-length case, Strassen~\cite{Strassen} showed using the central limit theorem that the backoff from $D(P_0 \| P_1)$ is of the order $\Theta(\frac{1}{\sqrt{n}})$ and the implied constant was also quantified in terms of the so-called relative entropy variance~\cite{Vincentbook} and the Gaussian cumulative distribution function.  When both error exponents are required to be positive, a strong large deviations analysis implied by the works of Altu\u{g} and Wagner~\cite{Altug2014, Altug20142} can be also used to quantified the backoff, which is of the order $\Theta(\frac{\log n}{n})$. For sequential hypothesis testing, to the best of the authors' knowledge, there has been not much work on backing off from infinity. One related and elegant work that formed the motivation for the current work is that by Lalitha and Javidi~\cite{anusha}. The authors derived the optimal  type-I and type-II error exponents associated with a  newly introduced {\em almost-fixed-length hypothesis test}; this is one in which with exponentially small probability, the decision maker is allowed to collect another set of samples (in addition to the $n$ she already possesses). The authors showed that this setting interpolates between classical hypothesis testing with a fixed sample size and sequential hypothesis testing.  In a related non-Bayesian  setting, Polyanskiy and Verd\'u~\cite{PV10} considered binary hypothesis testing with feedback wherein the two hypothesis are characterized by discrete memoryless channels $P_{Y|X}$ and $Q_{Y|X}$. In addition to being able to access feedback from the tester, the controller is also able to adaptively control the inputs to the controller. It was shown that the   control strategy used in~\cite{PV10} is asymptotically optimal in a certain Bayesian setting studied by Naghshvar and Javidi~\cite{NJ13}.

In the statistics literature~\cite{WaldWolf, ferguson, SequentialAnalysis}, for the sequential hypothesis testing problem, statisticians have extensively studied the performance of the two types of error probabilities and the expected sample length as functions of the parameters of SPRTs. In~\cite{Lotov86, AsymptoticsSPRT}, Lotov studied SPRTs and derived series formulas for the two error probabilities and the expected sample length as the parameters of SPRTs grow without bound assuming mild regularity conditions on the distributions. In contrast, in this work, we study the {\em optimal} second-order exponents of the two types of error probabilities under two families of constraints on the expected sample length over {\em all} sequential hypothesis tests (SHTs) and show that SPRTs asymptotically achieve the second-order exponents. That is, SPRTs are optimal not only in the sense of maximizing the first-order error exponents~\cite{WaldWolf} but also from the perspective of optimizing their  second-order terms, which we characterize in this paper.

\section{Problem Setup}

Let $P_0$ and $P_1$ be two probability measures on $\mathbb{R}$ such that 
$$0<D(P_{0}\|P_{1})<\infty\quad \mbox{and}\quad 
0<D(P_{1}\|P_{0})<\infty,$$  where recall that the {\em relative entropy} $D(P\| Q)= \int\log\frac{\mathrm{d} P}{\mathrm{d} Q}\, \mathrm{d} P$.
Under hypothesis $H_i$ for $i =0,1$, probability measure $P_i$ is in effect.  For $i=0,1$, let $p_i(\cdot)$ be the density (resp.\ probability mass function) of $P_i$  when $P_i$ is absolutely continuous with respect to the Lebesgue measure on $\mathbb{R}$ (resp.\ the support of $P_i$ assumes finitely many values). For $i=0,1$, let $\mu_i$ be the probability measure on $\mathbb{R}^{\infty}$ such that $\mu_i=P_i\times P_i\times\ldots$. Let $X=\{X_i\}_{i=1}^{\infty}$ be an observed i.i.d.\ sequence, where $X_i\sim P_0$  or $X_i\sim P_1$. Let $\mathcal{F}(X_1^n)$ be the $\sigma$-algebra generated by $(X_1,\ldots,X_n)$, the first $n$ random variables of $X$. Let $T$ be a stopping time adapted to the filtration $\{\mathcal{F}(X_1^n)\}_{n=1}^{\infty}$ and let $\mathcal{F}_T$ be the $\sigma$-algebra associated with $T$ (for definitions of stopping times and $\mathcal{F}_T$, see~\cite{durrettprobability}). Let $\delta$ be a $\{0,1\}$-valued $\mathcal{F}_T$-measurable function. The pair $(\delta, T)$ is called an  SHT. In an  SHT $(\delta,T)$, $\delta$ is called the {\em decision function} and $T$ is called the {\em sample size}. When $\delta=0$ (resp.\ $\delta=1$), the decision is made in favor of $H_0$ (resp.\ $H_1$).  Let $P_0^{T}$ (resp.\ $P_1^{T}$) be the probability measure obtained by restricting $\mu_0$ (resp.~$\mu_1$) to $\mathcal{F}_T$. The {\em type-\uppercase\expandafter{\romannumeral1}} and {\em type-\uppercase\expandafter{\romannumeral2} error probabilities} are defined as 
$$
P_{1|0}(\delta,T)=P_0^{T}(\delta=1)\,\mbox{and}\,P_{0|1}(\delta,T)=P_1^{T}(\delta=0).
$$
In other words, $P_{1|0}(\delta,T)$  (resp.\ $P_{0|1}(\delta,T)$) is the error probability that the true hypothesis is $P_0$ (resp.\ $P_1$) but $\delta=1$ (resp.\ $\delta=0$) based on the observations up to time $T$. For notational convenience, when the underlying (expected) length of the sequence $n$ is clear from the context, both $P_{i}^{T}(\cdot)$ and $P_{i}^{n}(\cdot)$ will be abbreviated as $P_{i}(\cdot)$; that is, we omit all superscripts on $P_i (\cdot)$. 

One important class of SHTs is the family of SPRTs. Let $Y_k=\log\frac{p_0(X_k)}{p_1(X_k)}$ and $S_n=\sum_{k=1}^{n}Y_k$. For any pair of positive real numbers $\alpha$ and $\beta$, an SPRT with parameters $(\alpha,\beta)$ is defined as follows
$$
\delta=\begin{cases}0&\mbox{if $
S_T>\beta$}\\
1&\mbox{if $S_T<-\alpha$},
\end{cases}
$$
where $T=\inf\{n\ge 1:S_n\notin[-\alpha,\beta]\}$.

In this paper, we consider two kinds of constraints on the sample size $T$. The first is the {\em probabilistic constraint}; that is, for any integer $n$ and error tolerance $0<\eps<1$, the sample size $T$ satisfies that $\max_{i=0,1}{P_i(T> n)}\le \eps$. The second is the {\em expectation constraint}; that is, for any integer $n$, $\max_{i=0,1}{\E _{P_i}[T]}\le n$. We say that an error exponent pair $(E_0, E_1)$ is {\em achievable under the expectation constraint} if there exists a sequence of  SHTs $\{(\delta_n,T_n)\}_{n=1}^{\infty}$ with $\max_{i=0,1}{\E _{P_i}[T_n]}\le n$ such that %  we are concerned with the {\em error exponents} $(E_0, E_1)$ defined as
\begin{align*}
 \liminf_{n\to\infty}\frac{1}{n}\log \frac{1}{P_{1|0}(\delta_n,T_n)}&\ge E_0 ,\quad \mbox{and}\\
  \liminf_{n\to\infty}\frac{1}{n}\log \frac{1}{P_{0|1}(\delta_n,T_n)}&\ge E_1.
\end{align*}
The set of all achievable $(E_0, E_1)$ is denoted as ${\cal E}(P_0,P_1)$. 
It has been shown in~\cite{WaldWolf} that  %, the set of $(E_0,E_1)$ pairs that are achievable (i.e., the region in which there exists a sequence of tests $\{(\delta_n,T_n)\}_{n=1}^\infty$ that yields the pair $(E_0,E_1)$) is given by
$${\cal E}(P_0,P_1) =\{(E_0, E_1):E_0E_1\le
D(P_1\|P_0)D(P_0\|P_1)\}.$$
The error exponent pair $(E_0,E_1)=(D(P_1\|P_0),D(P_0\|P_1))$ can be achieved by a sequence of SPRTs. In this paper, we are concerned with the {\em speed} or {\em rate of convergence} to this   point of the achievable region of the error exponents $(D(P_1\| P_0), D(P_0\| P_1))$ under the two kinds of constraints on the sample size $T_n$. The rates of convergence are formally defined in the following:

\begin{definition} For fixed $(\lambda , \eps)\in[0,1]\times (0,1)$ and an SHT $(\delta_n,T_n)$, 
let \begin{align}
& g_n(\lambda,\eps|\delta_n,T_n)\nonumber\\
 &\;=\lambda\left(\frac{1}{\sqrt{n}}\log \frac{1}{P_{1|0}(\delta_n, T_n)}-\sqrt{n} \,D(P_1\|P_0)\right)\notag\\
&\quad+(1\!-\!\lambda)\left(\frac{1}{\sqrt{n}}\log \frac{1}{P_{0|1}(\delta_n, T_n)}\! -\!\sqrt{n}\,D(P_0\|P_1)\right).\notag
\end{align}
and 
\begin{align}\label{probabilistic1}
\hspace{-6mm}G_n(\lambda,\eps)&=\sup_{ (\delta_n, T_n):\max_{i=0,1}P_i( T_n> n)\le\eps} g_n(\lambda,\eps|\delta_n,T_n).
\end{align}
Let $
\overline{G}(\lambda,\eps)=\limsup_{n\to\infty}G_n(\lambda,\eps)$ and $\underline{G}(\lambda,\eps)=\liminf_{n\to\infty}G_n(\lambda,\eps).
$
If 
$\overline{G}(\lambda,\eps)=\underline{G}(\lambda,\eps)$, then we term this common value as the {\em second-order exponent of SHT under the probabilistic constraint} and we denote it simply as $G(\lambda,\varepsilon)$.  
\end{definition}

\begin{definition} For fixed $\lambda\in[0,1]$ and an SHT $(\delta_n,T_n)$, 
let
\begin{align}
f_n(\lambda|\delta_n,T_n)&=\lambda\left(\log {P_{1|0}(\delta_n, T_n)}+nD(P_1\|P_0)\right)\notag\\
&\quad+(1-\lambda)\left(\log {P_{0|1}(\delta_n, T_n)}+nD(P_0\|P_1)\right)\notag
\end{align}
and
 \begin{align}\label{expectationconstraint}
F_n(\lambda)&=\sup_{ (\delta_n, T_n):\max_{i=0,1}\E_{P_i}[ T_n]\le  n} f_n(\lambda|\delta_n,T_n).
\end{align}
Let $\overline{F}(\lambda)=\limsup_{n\to\infty}F_n(\lambda)$ and $\underline{F}(\lambda)=\liminf_{n\to\infty}F_n(\lambda).
$
If $\overline{F}(\lambda)=\underline{F}(\lambda)$, then we term this common value as the {\em second-order exponent of SHT under the expectation constraint} and we denote it simply as $F(\lambda)$.\footnote{We note that $G_n(\lambda, \varepsilon)$ and $F_n(\lambda)$ in \eqref{probabilistic1} and \eqref{expectationconstraint} respectively are defined with opposing signs; this is to ensure that the results are stated as cleanly as possible.  The normalizations of $g_n(\lambda,\eps|\delta_n,T_n)$ and $f_n(\lambda|\delta_n,T_n)$ by different functions of the expected length $n$  are also different. }  
\end{definition}
%\textcolor{red}{Here the reason I define $F_n$ in this way is to avoid the negative sign appeared in the final result.}

Throughout the paper, $\Phi(\cdot)$ is used to denote the cumulative distribution  function of a standard Gaussian  and $\Phi^{-1}(\cdot)$ is used to denote the generalized inverse   of $\Phi(\cdot)$.
\section{Main Results}
Our first theorem characterizes the second-order exponent under the probabilistic constraint on the sample size. For $i=0,1$, we define the {\em relative entropy variance}~\cite{Vincentbook} between $P_i$ and $P_{1-i}$ as 
\begin{align}
V(P_i\|P_{1-i})&=\E_{P_i}\left[\bigg(\log\frac{p_{i}(X_1)}{p_{1-i}(X_1)}-D(P_i\|P_{1-i})\bigg)^2\right]\notag\\
&= \mathrm{Var}_{ P_i}\left[\log\frac{p_{i}(X_1)}{p_{1-i}(X_1)} \right].\notag
\end{align}
\begin{thm}\label{probabilistic-thm}
Let $P_0$ and $P_1$ be such that 
\begin{align}\label{thirdmoment}
\max_{i=0,1}\E_{P_i}\left[\Big|\log\frac{p_0(X_1)}{p_{1}(X_1)}\Big|^3\right]<\infty.
\end{align} Then,  for every  $\lambda\in[0,1]$ and $0<\eps<1$, we have
\begin{align}\label{probabilistic}
 G(\lambda,\eps) &=\overline{G}(\lambda,\eps)=\underline{G}(\lambda,\eps)\notag\\
&= \lambda \sqrt{V(P_0\|P_1)}\Phi^{-1}(\eps) \notag\\
&\qquad+(1-\lambda)\sqrt{V(P_1\|P_0)}\Phi^{-1}(\eps).
\end{align}
\end{thm}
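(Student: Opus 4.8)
The plan is to reduce the supremum over all SHTs to an analysis of boundary crossings of the random walk $S_n$, exploiting the fact that the two bracketed terms of $g_n$ decouple: the type-I error $P_{1|0}$ is controlled by the lower boundary $-\alpha$, which is crossed when $P_1$ is in force, whereas the type-II error $P_{0|1}$ is controlled by the upper boundary $\beta$, crossed when $P_0$ is in force. Write $m_i=D(P_i\|P_{1-i})$ and $\sigma_i=\sqrt{V(P_i\|P_{1-i})}$, so that the increments $Y_k$ have mean $m_0$ and standard deviation $\sigma_0$ under $P_0$, and mean $-m_1$ and standard deviation $\sigma_1$ under $P_1$. I expect each bracketed term to converge to $\Phi^{-1}(\eps)$ times the standard deviation attached to the boundary governing it, so that the type-I term contributes $\Phi^{-1}(\eps)\sigma_1$ and the type-II term contributes $\Phi^{-1}(\eps)\sigma_0$; weighted by $\lambda$ and $1-\lambda$, these assemble into $G(\lambda,\eps)$. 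Throughout I restrict to the nondegenerate regime in which both error probabilities vanish, since otherwise one bracketed term tends to $-\infty$ and the corresponding bound is trivial.

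For achievability I would use the SPRT with boundaries $\alpha_n=nm_1+(\Phi^{-1}(\eps)-\eta)\sqrt n\,\sigma_1$ and $\beta_n=nm_0+(\Phi^{-1}(\eps)-\eta)\sqrt n\,\sigma_0$ for arbitrarily small $\eta>0$. Wald's likelihood-ratio inequalities give $P_{1|0}\le e^{-\alpha_n}$ and $P_{0|1}\le e^{-\beta_n}$ outright—in this direction the overshoot only helps—so $g_n\ge(\Phi^{-1}(\eps)-\eta)[\lambda\sigma_1+(1-\lambda)\sigma_0]+o(1)$. It then remains to check the probabilistic constraint. Since $T_n$ is at most the first-passage time $T^{\mathrm{up}}_{\beta_n}$ of the upper boundary, $P_0(T_n>n)\le P_0(T^{\mathrm{up}}_{\beta_n}>n)$, and the central limit theorem for first-passage times (mean $\beta_n/m_0$, standard deviation $\sigma_0\sqrt n/m_0+o(\sqrt n)$) shows the latter converges to $\Phi(\Phi^{-1}(\eps)-\eta)<\eps$; the overshoot enters only at the $O(1)=o(\sqrt n)$ scale and is made rigorous by nonlinear renewal theory. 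The same argument under $P_1$ controls $P_1(T_n>n)$, so the constraint holds for all large $n$, and letting $\eta\downarrow0$ yields $\underline G(\lambda,\eps)\ge G(\lambda,\eps)$.

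For the converse I would fix an arbitrary SHT satisfying the constraint and bound the type-II term; the type-I term is symmetric. Applying the likelihood-ratio change of measure at the bounded stopping time $T\wedge n$, namely $P_0(B)=\E_{P_1}[\mathbf 1_B\,e^{S_{T\wedge n}}]$ for $B\in\mathcal F_{T\wedge n}$, with $B=\{\delta=0,\,S_T\le\gamma,\,T\le n\}$, gives $P_{0|1}=P_1(\delta=0)\ge e^{-\gamma}\,P_0(\delta=0,\,S_T\le\gamma,\,T\le n)$. On $\{T\le n\}$ one has $S_T\le M_n:=\max_{k\le n}S_k$, whence the probability on the right is at least $1-P_{1|0}-P_0(M_n>\gamma)-P_0(T_n>n)\ge 1-P_{1|0}-P_0(M_n>\gamma)-\eps$. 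Taking $\gamma=nm_0+(\Phi^{-1}(\eps)+\eta)\sqrt n\,\sigma_0$ and using the central limit theorem for the running maximum gives $P_0(M_n>\gamma)\to 1-\Phi(\Phi^{-1}(\eps)+\eta)$, so the bracket is bounded below by the positive constant $\Phi(\Phi^{-1}(\eps)+\eta)-\eps$. Hence $\tfrac1{\sqrt n}\log\tfrac1{P_{0|1}}-\sqrt n\,m_0\le(\Phi^{-1}(\eps)+\eta)\sigma_0+o(1)$, and $\eta\downarrow0$ furnishes the matching upper bound, giving $\overline G(\lambda,\eps)\le G(\lambda,\eps)$.

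The hard part is the converse, and specifically the passage from the value $S_T$ at the random, possibly untruncated stopping time to the running maximum $M_n$ over the fixed horizon $n$. Two features must be handled carefully. First, the probabilistic constraint is used precisely to absorb the event $\{T_n>n\}$ at the cost of an additive $\eps$; this is what pins the relevant Gaussian quantile to $\Phi^{-1}(\eps)$ rather than to a larger value. Second, one needs $M_n$ to satisfy the same $\sqrt n$-scale Gaussian limit as $S_n$, i.e.\ that under the prevailing positive drift the running maximum exceeds the terminal value $S_n$ by only $O_P(1)$; this negligibility of the maximal fluctuation is exactly the content I would import from the central limit theorem for maxima of random walks. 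I expect the third-moment hypothesis \eqref{thirdmoment} to be needed here to upgrade these central limit statements to Berry--Esseen estimates that are uniform enough—both in the first-passage and running-maximum approximations—for the $o(1)$ error terms above to be genuinely negligible at the $\sqrt n$ scale.
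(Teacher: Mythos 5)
Your proposal is correct and follows essentially the same route as the paper's own proof: for achievability, an SPRT with thresholds $nD(P_i\|P_{1-i})+\sqrt{n}\,\sigma\,\Phi^{-1}(\eps)$ (up to an $\eta$-slack), Wald's likelihood-ratio bounds for the errors, and Rogozin's Berry--Esseen theorem for the running maximum to verify $P_i(T_n>n)\le\eps$; for the converse, a change-of-measure bound combined with $S_T\le\max_{k\le n}S_k$ on $\{T\le n\}$ and absorption of $\{T>n\}$ at cost $\eps$, your only (harmless) deviations being the truncation at $T\wedge n$ in place of the paper's Lemma~\ref{probabilisticerror} (which instead invokes $P_1(T<\infty)=1$) and a fixed-$\eta$ quantile $\Phi^{-1}(\eps)+\eta$ in place of the paper's $\Phi^{-1}(\eps+2M/\sqrt{n})$. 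Two remarks: your pairing of the variances ($\lambda\sqrt{V(P_1\|P_0)}+(1-\lambda)\sqrt{V(P_0\|P_1)}$, i.e.\ type-I bracket with $V(P_1\|P_0)$) agrees with the paper's own intermediate bounds \eqref{im8}--\eqref{im9} and achievability display---the pairing in the theorem statement itself appears to be a typo---and your one-line dismissal of tests with non-vanishing error probabilities (``the bound is trivial'') is exactly as brief as the paper's corresponding restriction to the class $\mathcal{T}^{(n)}$, and shares its delicacy: for $\lambda\in(0,1)$ a test with $P_{1|0}\to 1$ makes one bracket $-\Theta(\sqrt{n})$ but can make the other $+\Theta(\sqrt{n})$ with a larger constant, so ruling such tests out genuinely requires an argument rather than the assertion that the degenerate case is trivial.
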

The proof of Theorem~\ref{probabilistic-thm} can be found in Section~\ref{sec:prf_prob} and essentially relies  on the Berry-Esseen theorem for the maximal sum of independent random variables~\cite{CLTmaximal}.

Our second main theorem concerns the second-order exponent under the expectation constraint.  To set things up before stating our result, we introduce a few definitions. A real-valued random variable $Z$ is said to be {\em arithmetic} if there exists a positive real number $d$ such that $P(Z\in d\mathbb{Z})=1$; otherwise $Z$ is said to be {\em non-arithmetic}. If $Z$ is arithmetic, then the smallest positive $d$ such that $P(Z\in d\mathbb{Z})=1$ is called the {\em span} of $Z$. %{\color{red}{(I didn't change the definition because the definition suggested by the reviewer is that of random variables having lattice distributions.)}}%In the rest of this section we assume $\log\frac{p_0(X_1)}{p_1(X_1)}$ is non-arithmetic. 

%Let $N_0=\inf\{n\ge 1:S_n>0\}$ and $N_1=\{n
%\ge 1: S_n<0\}$. Let 
%$$
%A(P_0,P_1)=\frac{\E_{P_0}[S_{N_0}]}{2D(P_0\|P_1)\E_{P_0}[S_{N_0}^{2}]} \quad\mbox{and}\quad \tilde{A}(P_0,P_1)=\frac{\E_{P_1}[-S_{N_1}]}{2D(P_1\|P_0)\E_{P_1}[S_{N_1}^2]}.
%$$
%\textcolor{red}{I didn't delete $\tilde{A}(P_0,P_1)$ because we implicitly also need to switch the role of $P_0$ and $P_1$ in the definition of $Y_i$.}

Let $\{\alpha_k\}_{k=1}^{\infty}$ and $\{\beta_k\}_{k=1}^{\infty}$ be two increasing sequences of positive real numbers such that $\alpha_k\to \infty$ and $\beta_k\to \infty$ as $k\to \infty$. Let ${T}(\beta_k)=\inf\{n\ge 1:S_n>\beta_k\}$ and $\tilde{T}(\alpha_k)=\inf\{n\ge 1:-S_n>\alpha_k\}$. Furthermore, let $R_k=S_{T(\beta_k)}-\beta_k$ and $\tilde{R}_k=-S_{\tilde{T}(\alpha_k)}-\alpha_k$. From~\cite[Theorem~2.3, pp.~18]{nonlinearrenewaltheory}, it follows that 
\begin{itemize}
\item
if the true hypothesis is $H_0$, $\{ {R}_k\}_{k=1}^{\infty}$ converges in distribution to some random variable $ {R}$ and the limit is independent of the choice of $\{\alpha_k\}_{k=1}^{\infty}$;
\item 
if the true hypothesis is $H_1$, $\{\tilde{R}_k\}_{k=1}^{\infty}$ converges in distribution to some random variable $\tilde{R}$ and the limit is independent of the choice of $\{\beta_k\}_{k=1}^{\infty}$.
\end{itemize}
Define 
\begin{alignat}{2}
A(P_0,P_1) &=\E[ {R}],&\qquad\tilde{A}(P_0,P_1) &=\E[\tilde{R}], \notag\\
  B(P_0,P_1)&=\log\E[e^{-R}],&\qquad\tilde{B}(P_0,P_1)&=\log\E[e^{-\tilde{R}}].\notag
\end{alignat}
We note that  these quantities are, in general, not symmetric in their arguments, i.e., $\tilde{A}(P_0,P_1 )\ne A(P_1,P_0)$ and $\tilde{B}(P_0,P_1)\ne  B(P_1,P_0)$ in general.
 \begin{thm}\label{expectation-thm} 
Let $P_0$ and $P_1$ be such that $$\max_{i=0,1}\E_{P_i}\left[\Big|\log\frac{p_0(X_1)}{p_1(X_1)}\Big|^2\right]<\infty$$ and $\log\frac{p_0(X_1)}{p_1(X_1)}$ is non-arithmetic when $X_1\sim P_0$. Then   for every  $\lambda\in[0,1]$, 
 \begin{align}\label{thm2}
F(\lambda)&= \overline{F}(\lambda)=\underline{F}(\lambda)\notag\\
&=\lambda \big(\tilde{A}(P_0,P_1)+\tilde{B}(P_0,P_1)\big)\notag\\
&\hspace{0.4cm}+(1-\lambda)\big(A(P_0,P_1)+ B(P_0,P_1)\big).
\end{align}
\end{thm}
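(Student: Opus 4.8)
The plan is to prove Theorem~\ref{expectation-thm} by establishing the achievability bound $\underline{F}(\lambda)\ge F(\lambda)$ and the converse (optimality) bound $\overline{F}(\lambda)\le F(\lambda)$ separately, and in both directions the analysis reduces to a sharp study of the SPRT via the change-of-measure (Wald likelihood-ratio) identity together with the limiting overshoot laws $R,\tilde{R}$ furnished by nonlinear renewal theory \cite{nonlinearrenewaltheory}. The organizing observation is that the drift of $S_n$ equals $D(P_0\|P_1)$ under $H_0$ and $-D(P_1\|P_0)$ under $H_1$, so that for an SPRT with thresholds $(\alpha,\beta)$ the two error probabilities and the two expected sample sizes are governed by the two boundaries essentially independently: $\beta$ together with $\E_{P_0}[T]$ controls $P_{0|1}$, while $\alpha$ together with $\E_{P_1}[T]$ controls $P_{1|0}$.

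For achievability I would fix $\lambda$ and take the SPRT with thresholds $\beta_n=nD(P_0\|P_1)-A(P_0,P_1)$ and $\alpha_n=nD(P_1\|P_0)-\tilde{A}(P_0,P_1)$, adjusting each by an $o(1)$ amount if necessary so that both expectation constraints hold exactly. Wald's identity gives $\E_{P_0}[S_T]=D(P_0\|P_1)\,\E_{P_0}[T]$, and since under $H_0$ the walk exits through the upper boundary with overwhelming probability, $\E_{P_0}[S_T]=\beta_n+\E_{P_0}[R_{\beta_n}]+o(1)$, which pins down $\E_{P_0}[T]$; the symmetric computation under $H_1$ controls $\E_{P_1}[T]$. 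For the error probabilities I would use the identities $P_{0|1}=\E_{P_0}[e^{-S_T}\mathbf{1}\{\delta=0\}]=e^{-\beta_n}\E_{P_0}[e^{-R_{\beta_n}}\mathbf{1}\{\delta=0\}]$ and $P_{1|0}=e^{-\alpha_n}\E_{P_1}[e^{-\tilde{R}_{\alpha_n}}\mathbf{1}\{\delta=1\}]$. By the nonlinear renewal theorem $R_{\beta_n}\Rightarrow R$ and $\tilde{R}_{\alpha_n}\Rightarrow\tilde{R}$; upgrading these to $\E[R_{\beta_n}]\to A$ and $\E[e^{-R_{\beta_n}}]\to e^{B}$ (and their tilded analogues) then yields $\log P_{0|1}=-\beta_n+B+o(1)$ and $\log P_{1|0}=-\alpha_n+\tilde{B}+o(1)$, so that $f_n(\lambda\,|\,\delta_n,T_n)\to\lambda(\tilde{A}+\tilde{B})+(1-\lambda)(A+B)=F(\lambda)$.

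For the converse I would invoke the classical Wald--Wolfowitz optimality of the SPRT \cite{WaldWolf}: any SHT with error probabilities $(P_{1|0},P_{0|1})$ can be reproduced by a (possibly randomized) SPRT whose two expected sample sizes are no larger, so that SPRT remains feasible under $\max_{i}\E_{P_i}[T_n]\le n$ and, since $f_n$ depends only on the error probabilities, attains the same value. This reduces the extremization defining $F_n(\lambda)$ to SPRTs, after which the renewal computation of the previous paragraph—now read as a bound—shows that the feasibility constraints force $\beta\le nD(P_0\|P_1)-A+o(1)$ and $\alpha\le nD(P_1\|P_0)-\tilde{A}+o(1)$, whence no feasible SPRT can improve on $F(\lambda)+o(1)$. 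The decoupling of the two boundaries is exactly what makes the two one-sided estimates combine additively into the stated weighted sum.

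The main obstacle will be the passage from convergence in distribution of the overshoots to convergence of the relevant functionals. The factor $\E[e^{-R_{\beta_n}}]\to e^{B}$ is easy since $e^{-R_{\beta_n}}\in[0,1]$ and bounded convergence applies, but $\E[R_{\beta_n}]\to A$, which feeds into the expected-sample-size computation and hence into the threshold choice, requires a uniform-integrability argument; this is precisely where the finite second-moment hypothesis on $\log(p_0/p_1)$ enters (via Lorden-type bounds giving $\sup_\beta\E[R_\beta^2]<\infty$), while the non-arithmetic assumption guarantees a genuine limiting overshoot law through \cite{nonlinearrenewaltheory}. Two further points demand care: showing that the contribution of the ``wrong'' exit boundary in Wald's identity and in the change-of-measure identities is $o(1)$ because the associated error event is exponentially rare, and, in the converse, verifying that an arbitrary test's error pair can indeed be matched by an SPRT so that the Wald--Wolfowitz reduction applies—again a place where the non-arithmetic assumption, which secures continuity of the SPRT error probabilities in its thresholds, is used.
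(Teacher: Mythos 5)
Your overall architecture coincides with the paper's: achievability via an SPRT whose thresholds are tuned to $\alpha_n \approx nD(P_1\|P_0)-\tilde{A}(P_0,P_1)$ and $\beta_n \approx nD(P_0\|P_1)-A(P_0,P_1)$, with the expected sample sizes and error probabilities evaluated by nonlinear renewal theory, and a converse via Wald--Wolfowitz optimality of the SPRT. Your first-principles derivation of the asymptotics (Wald's identity plus overshoot convergence, exponentially negligible wrong-boundary contributions, and the upgrade from weak convergence to convergence of $\E[R_{\beta_n}]$ and $\E[e^{-R_{\beta_n}}]$) is precisely the content of the two results the paper imports wholesale from nonlinear renewal theory (its Theorems \ref{asymptotic-nonlattice} and \ref{asymptoticserrornonlattice}), so this part is sound in substance; note, however, that your uniform-integrability justification is miscalibrated, since $\sup_\beta \E[R_\beta^2]<\infty$ via Lorden-type bounds would require a \emph{third} moment of the log-likelihood ratio, whereas the convergence $\E[R_\beta]\to A(P_0,P_1)$ under only second moments is exactly what the cited renewal theorem supplies. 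Also, because the law of $S_{T}$ can have atoms, you cannot in general tune the thresholds so that the expectation constraints hold ``exactly''; the paper's device of building in an $\eta$-slack and sending $\eta\to 0^+$ at the end handles this cleanly.

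The one genuine gap is in your converse reduction. You invoke Wald--Wolfowitz in the form ``any SHT's error pair can be matched by an SPRT with no larger expected sample sizes,'' and you propose to justify the matching by continuity of the SPRT error probabilities in $(\alpha,\beta)$, attributing that continuity to the non-arithmetic hypothesis. That claim is false in general: non-arithmeticity does not prevent $S_n$ from being supported on a countable set (e.g.\ increments supported on $\{1,-\sqrt{2}\}$), in which case the error probabilities are jump functions of the thresholds and exact matching by a non-randomized SPRT is impossible; the classical patch requires randomized or extended SPRTs, which your plan does not develop. The paper sidesteps matching entirely by using the dual comparison form of optimality (its Lemma~\ref{optimality}, from Ferguson): fix one comparator SPRT with thresholds inflated by $\eta$ so that, by Theorem~\ref{asymptotic-nonlattice}, \emph{both} of its expected sample sizes exceed $n$ for large $n$; then every test satisfying $\max_{i}\E_{P_i}[T_n]\le n$ has both expected sizes dominated by the comparator's, hence both error probabilities bounded below by the comparator's, which are then evaluated via Theorem~\ref{asymptoticserrornonlattice} before letting $\eta\to 0^+$. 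Replacing your matching step by this fixed-comparator argument closes the gap without any new machinery; the rest of your converse (feasibility forcing $\beta_n\le nD(P_0\|P_1)-A(P_0,P_1)+o(1)$, etc.) then becomes unnecessary.
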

Theorem \ref{expectation-thm} is proved in Section~\ref{sec:prf_exp} and  relies on   results in nonlinear renewal theory~\cite{durrettprobability, nonlinearrenewaltheory} as well as key properties of the SPRT~\cite{ferguson, SequentialAnalysis ,Lotov86,AsymptoticsSPRT}.
 
\begin{remark}
If $X_1\sim P_0$ and $\log\frac{p_0(X_1)}{p_1(X_1)}$  is arithmetic, say with span $d>0$, Theorems~\ref{asymptotic-nonlattice} and~\ref{asymptoticserrornonlattice} (see Section~\ref{tools}) that we leverage on in the proof of Theorem \ref{expectation-thm} hold only for SPRTs with parameters $(\alpha_n,\beta_n)$ where $\alpha_n$ and $\beta_n$ are integer multiples of $d$. In this case $(\alpha_n,\beta_n)$ should be chosen as $(\lfloor{\frac{nD(P_1\|P_0)}{d}-a_0\rfloor}d,\lfloor{\frac{nD(P_0\|P_1)}{d}-a_1\rfloor}d)$ for some constants $a_0$ and $a_1$ depending on $P_0$ and $P_1$. However, the limit of $(\lfloor{\frac{nD(P_1\|P_0)}{d}-a_0\rfloor}d,\lfloor{\frac{nD(P_0\|P_1)}{d}-a_1\rfloor}d)$ as $n\to\infty$ may not exist. This technical difficulty makes the problem challenging for arithmetic $\log\frac{p_0(X_1)}{p_1(X_1)}$ and we defer the analysis of this case to future work. 
\end{remark}

\begin{remark}
From Theorems \ref{probabilistic-thm} and \ref{expectation-thm}, we see that the rate of convergence of the optimal $\lambda$-weighted finite-length exponents $\sup_{(\delta_n,T_n)}-\frac{\lambda}{n} \log {P_{1|0} (\delta_n,T_n)} -\frac{1-\lambda}{n} \log {P_{0|1} (\delta_n,T_n)} $ to the $\lambda$-weighted exponents $\lambda D(P_1\| P_0)+ (1-\lambda ) D(P_0\|P_1)$ is faster under the
expectation constraint as compared to the probabilistic constraint. In fact, the former rate is $\Theta(\frac{1}{n})$ while the latter rate is $\Theta(\frac{1}{\sqrt{n}})$. Thus, the latter constraint is more stringent. Our main contribution is to nail down the exact order and the constants of $\Theta(\frac{1}{\sqrt{n}})$ and $\Theta(\frac{1}{n})$ given in~\eqref{probabilistic} and~\eqref{thm2} respectively.  The expectation constraint  is   reminiscent of variable-length channel coding with feedback~\cite{PPV11, TT17} in which the speed of convergence to the $\varepsilon$-capacity is not $\Theta(\frac{1}{\sqrt{n}})$ but rather $O(\frac{\log n}{n})$. However, different from~\cite{PPV11, TT17}, the ``strong converse'' holds as the first-order fundamental limit $(D(P_1\| P_0), D(P_0\| P_1))$ does not depend on $\varepsilon$.
\end{remark}
\subsection{Examples} \label{sec:examples}
In this subsection, we present two examples and numerically compute their second-order exponents under the expectation constraint (since the computation of second-order terms under the probabilistic constraint is straightforward). The following theorem extracted from~\cite[Corollary~2.7 and Theorem~3.3, pp.~24 and~32]{nonlinearrenewaltheory} provides  more concrete characterizations of  the quantities $A(P_0,P_1), \tilde{A}(P_0,P_1), B(P_0,P_1)$, and $\tilde{B}(P_0,P_1)$. For a real number $a$, we write $a^+=\max\{a,0\}$ and $a^{-}=-\min\{a, 0\}$.

\begin{thm}\label{computation}
Let $P_0$ and $P_1$ be such that $$\max_{i=0,1}\E_{P_i}\left[\Big|\log\frac{p_0(X_1)}{p_1(X_1)}\Big|^2\right]<\infty$$ and $\log\frac{p_0(X_1)}{p_1(X_1)}$ is non-arithmetic when $X_1\sim P_0$. Then
 \begin{align}
 A(P_0,P_1)&=\frac{\E_{P_0}\left[\log^2\frac{p_0(X_1)}{p_1(X_1)}\right]}{2D(P_0\|P_1)}-\sum_{k=1}^{\infty}\frac{1}{k}\E_{P_0}\left[S_k^{-}\right], \notag\\
 \tilde{A}(P_0,P_1)&=\frac{\E_{P_1}\left[\log^2\frac{p_0(X_1)}{p_1(X_1)}\right]}{2D(P_1\|P_0)}-\sum_{k=1}^{\infty}\frac{1}{k}\E_{P_1}\left[S_k^{+}\right],\notag
\end{align}
and 
\begin{align}
 B(P_0,P_1)&=-\log D(P_0\|P_1)\notag\\
 &\quad-\sum_{k=1}^{\infty}\frac{1}{k}\big(P_{0}(S_k<0)+P_1(S_k>0)\big),\notag\\
 \tilde{B}(P_0,P_1)&=-\log D(P_1\|P_0)\notag\\
 &\quad-\sum_{k=1}^{\infty}\frac{1}{k}\big(P_{0}(S_k<0)+P_1(S_k>0)\big).\notag
\end{align}
\end{thm}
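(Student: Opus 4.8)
The plan is to specialize the two general renewal-theoretic formulas cited in the statement---the one for the mean of the limiting overshoot and the one for its Laplace transform, i.e.\ \cite[Corollary~2.7 and Theorem~3.3]{nonlinearrenewaltheory}---to the two random walks that actually occur here, and then to simplify the results by means of the likelihood-ratio change of measure. The two walks are $S_n$ itself, which under $P_0$ has positive drift $\E_{P_0}[Y_1]=D(P_0\|P_1)$, and $-S_n=\sum_k(-Y_k)$, which under $P_1$ has positive drift $\E_{P_1}[-Y_1]=D(P_1\|P_0)$. The second-moment hypothesis supplies the integrability required by the cited results, while the non-arithmetic assumption---which transfers from $P_0$ to $P_1$ because $D(P_0\|P_1),D(P_1\|P_0)<\infty$ force $P_0$ and $P_1$ to be mutually absolutely continuous and hence to give $Y_1$ the same lattice structure---guarantees that the limits $R,\tilde R$ exist and that $P_i(S_k=0)=0$, so that ``$\le$'' and ``$<$'' may be interchanged below.

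For $A(P_0,P_1)=\E[R]$ I would apply the overshoot-mean formula of \cite{nonlinearrenewaltheory} to $S_n$ under $P_0$: with $\mu=D(P_0\|P_1)$ and $\E_{P_0}[Y_1^2]=\E_{P_0}[\log^2\frac{p_0(X_1)}{p_1(X_1)}]$ it returns $\E[R]=\frac{\E_{P_0}[Y_1^2]}{2\mu}-\sum_{k\ge1}\frac1k\E_{P_0}[S_k^-]$, which is exactly the claimed expression. Running the identical argument for the walk $-S_n$ under $P_1$ gives $\tilde A(P_0,P_1)$: the drift is $D(P_1\|P_0)$, the second moment is $\E_{P_1}[Y_1^2]=\E_{P_1}[\log^2\frac{p_0(X_1)}{p_1(X_1)}]$, and the negative part transforms as $(-S_k)^-=S_k^+$, producing the stated sum $\sum_{k\ge1}\frac1k\E_{P_1}[S_k^+]$.

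The two Laplace-transform quantities require the extra change-of-measure step, which is the crux. Writing $\tau=\inf\{n\ge1:S_n>0\}$ for the strict ascending ladder epoch of $S_n$ under $P_0$, the limiting overshoot $R$ has the stationary-excess law of the ladder height $S_\tau$, so $\E[e^{-R}]=(1-\E_{P_0}[e^{-S_\tau}])/\E_{P_0}[S_\tau]$ after an integration by parts. I would evaluate the pieces separately: by Wald's identity together with the Spitzer identity $\E_{P_0}[\tau]=\exp(\sum_{k\ge1}\frac1k P_0(S_k\le0))$, the denominator is $D(P_0\|P_1)\exp(\sum_{k\ge1}\frac1k P_0(S_k\le0))$; by the Baxter--Spitzer (Wiener--Hopf) identity at argument $1$, $1-\E_{P_0}[e^{-S_\tau}]=\exp(-\sum_{k\ge1}\frac1k\E_{P_0}[e^{-S_k};S_k>0])$. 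The decisive simplification is that $e^{-S_k}=\prod_{j\le k}\frac{p_1(X_j)}{p_0(X_j)}$ is exactly the Radon--Nikodym derivative $\mathrm{d}P_1^k/\mathrm{d}P_0^k$, so $\E_{P_0}[e^{-S_k};S_k>0]=P_1(S_k>0)$; taking logarithms gives $B(P_0,P_1)=-\log D(P_0\|P_1)-\sum_{k\ge1}\frac1k(P_0(S_k<0)+P_1(S_k>0))$. The same computation on $-S_n$ under $P_1$ yields a denominator $D(P_1\|P_0)\exp(\sum_{k\ge1}\frac1k P_1(S_k>0))$ and a numerator with exponent $-\sum_{k\ge1}\frac1k\E_{P_1}[e^{S_k};S_k<0]$; since now $e^{S_k}=\mathrm{d}P_0^k/\mathrm{d}P_1^k$, the last expectation equals $P_0(S_k<0)$, and one recovers the \emph{same} sum with only the leading term changed to $-\log D(P_1\|P_0)$.

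The main obstacle is this third step: invoking the Wiener--Hopf/Spitzer factorization precisely at the self-dual tilting value $\theta=1$ and recognizing that this is exactly the value at which the exponential weights $e^{\mp S_k}$ become the likelihood ratios swapping $P_0$ and $P_1$. This is what forces the two series in $B$ and $\tilde B$ to coincide even though the two quantities are governed by different walks and different drifts. A secondary check is that the series $\sum_{k\ge1}\frac1k P_0(S_k<0)$, $\sum_{k\ge1}\frac1k P_1(S_k>0)$ and $\sum_{k\ge1}\frac1k\E_{P_i}[S_k^{\mp}]$ all converge, which follows from the exponential decay of $P_i(S_k\le0)$ (resp.\ $P_i(S_k\ge0)$) against the drift guaranteed by the positive mean and the finite second moment.
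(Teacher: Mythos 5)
Your proposal should be compared against the fact that the paper contains \emph{no proof} of this theorem: it is ``extracted from'' Woodroofe's monograph (Corollary~2.7 for $A,\tilde A$ and Theorem~3.3 for $B,\tilde B$). What you have written is, in substance, the standard derivation that underlies that citation: the stationary-excess representation $\E[e^{-R}]=(1-\E_{P_0}[e^{-S_{\tau_+}}])/\E_{P_0}[S_{\tau_+}]$ with $\tau_+$ the first ascending ladder epoch, Wald's identity plus Spitzer's formula for the denominator, the Baxter--Spitzer (Wiener--Hopf) identity for the numerator, and the change of measure $\E_{P_0}[e^{-S_k};S_k>0]=P_1(S_k>0)$, correctly specialized to the two walks $S_n$ under $P_0$ and $-S_n$ under $P_1$ (including the transfer $(-S_k)^-=S_k^+$ and the correct identification of why the series in $B$ and $\tilde B$ coincide). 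The $A,\tilde A$ part is a clean application of the overshoot-mean identity and is fine as stated.

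There is, however, one step that is genuinely wrong: your claim that non-arithmeticity of $Y_1$ forces $P_i(S_k=0)=0$, so that ``$\le$'' and ``$<$'' may be interchanged. This is false: take $Y_1$ supported on $\{2,-1,\sqrt{2}\}$ (realizable as a log-likelihood ratio on a three-point alphabet by solving $\sum_i q_ie^{-y_i}=1$); then $Y_1$ is non-arithmetic yet $P_0(S_3=0)\ge P_0(Y_1=2)P_0(Y_2=-1)P_0(Y_3=-1)>0$. Done carefully with strict ascending ladder epochs, your own computation yields $B=-\log D(P_0\|P_1)-\sum_{k\ge1}\frac{1}{k}\bigl(P_0(S_k\le0)+P_1(S_k>0)\bigr)$ and, for $\tilde B$, the series $\sum_{k\ge1}\frac{1}{k}\bigl(P_1(S_k\ge0)+P_0(S_k<0)\bigr)$; since $e^{-S_k}=1$ on $\{S_k=0\}$ gives $P_0(S_k=0)=P_1(S_k=0)$, these two series are equal to each other (so your key structural conclusion survives), but both exceed the strict-strict series displayed in the theorem by $\sum_{k\ge1}\frac{1}{k}P_0(S_k=0)$, which can be strictly positive under the stated hypotheses. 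Your proof should therefore carry the mixed form $P_0(S_k\le0)+P_1(S_k>0)$, which agrees with the displayed formula exactly when $S_k$ has no atom at $0$---as in the paper's Gaussian and exponential examples. A second, minor misattribution: exponential decay of $P_0(S_k\le0)$ does not follow from positive mean and finite second moment; here it holds because $\E_{P_0}[e^{-\theta Y_1}]=\int p_0^{1-\theta}p_1^{\theta}\,\mathrm{d}x<1$ for $\theta\in(0,1)$ (Chernoff), although for the probability series the Chebyshev bound $P_0(S_k\le0)=O(1/k)$ already gives summability against $1/k$, and the convergence of $\sum_{k\ge1}\frac{1}{k}\E_{P_i}[S_k^{\mp}]$ is part of the content of the cited Corollary~2.7 under the second-moment hypothesis.
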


\begin{example}[Two Gaussians] \label{ex:gauss} Let $\theta_0$ and $\theta_1$ be two distinct real numbers. Let $p_0(x)=\frac{1}{\sqrt{2\pi}}e^{-\frac{(x-\theta_0)^2}{2}}$  and $p_1(x)=\frac{1}{\sqrt{2\pi}}e^{-\frac{(x-\theta_1)^2}{2}}$ for $x\in\mathbb{R}$. Then $Y_k=\log\frac{p_0(X_k)}{p_1(X_k)}=\frac{\theta_1^2-\theta_0^2}{2}+(\theta_0-\theta_1) X_k$. Let $\Delta\theta=\theta_1-\theta_0$ be the difference of the means. Thus, under   hypothesis $H_0$, $Y_k\sim \mathcal{N} \big(\frac{(\Delta\theta)^2}{2},(\Delta\theta)^2 \big)$, and under $H_1$, $Y_k\sim \mathcal{N}\big(-\frac{(\Delta\theta)^2}{2}, (\Delta\theta)^2\big)$. Therefore, under   hypothesis $H_0$, $S_k\sim \mathcal{N}\big(\frac{k(\Delta\theta)^2}{2},k(\Delta\theta)^2\big)$, and under    $H_1$, $S_k\sim \mathcal{N}\big(-\frac{k(\Delta\theta)^2}{2},k(\Delta\theta)^2\big)$. We can then derive the following important quantities: 
\begin{itemize}
\item $D(P_0\|P_1)=D(P_1\|P_0)=\frac{(\Delta\theta)^2}{2}$;
\item $\begin{aligned}[t]\E_{P_0}\left[\log^2\frac{p_0(X_1)}{p_1(X_1)}\right]&=\E_{P_1}\left[\log^2\frac{p_0(X_1)}{p_1(X_1)}\right]\\
&=\frac{(\Delta\theta)^4}{4}+(\Delta\theta)^2\mbox{;}\end{aligned}$
\item $P_{0}(S_k<0)+P_1(S_k>0)=2\Phi\left(-\frac{\sqrt{k}|\Delta\theta|}{2}\right);$
\item $\begin{aligned}[t]&\E_{P_1}\left[S_k^{+}\right]=\E_{P_0}\left[S_k^{-}\right]=-\frac{k(\Delta\theta)^2}{2}\Phi\bigg(-\frac{\sqrt{k}|\Delta\theta|}{2}\bigg)\\
&\hspace{4cm}+\frac{\sqrt{k|\Delta\theta|}}{2\pi}e^{-\frac{k|\Delta\theta|}{8}}.
\end{aligned}$  %\textcolor{red}{I believe the final term $\frac{\sqrt{k|\Delta\theta|}}{2\pi}e^{-\frac{k|\Delta\theta|}{8}}$ is incorrect. It's different from the previous version $p_0(k\theta^2/4)$.}
\end{itemize}
Using Theorems~\ref{expectation-thm} and~\ref{computation}, we can numerically compute the second-order exponent under the expectation constraint in \eqref{thm2}. This is illustrated in Figure~\ref{fig:1}. We note that for this  case of discriminating between two Gaussians, $F(\lambda)$ does not depend on $\lambda\in [0,1]$. 
\end{example} 

\begin{figure}
\centering
  \begin{subfigure}{0.45\textwidth}
  \includegraphics[width=7.0cm]{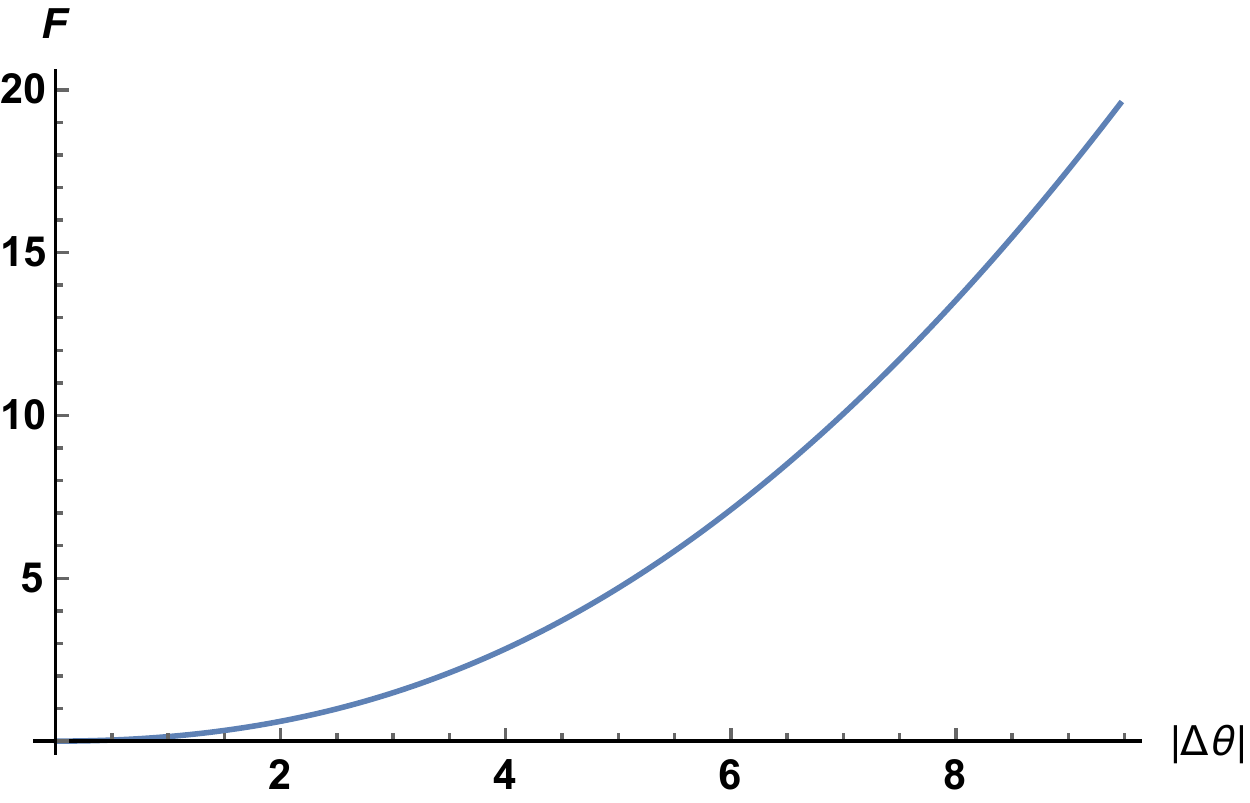}
    \caption{Illustration of $F(\lambda)$ for two  Gaussian distributions as in Example~\ref{ex:gauss}}
    \label{fig:1}
  \end{subfigure}
\hspace{1cm}
  \begin{subfigure}{0.45\textwidth}
    \includegraphics[width=7.0cm]{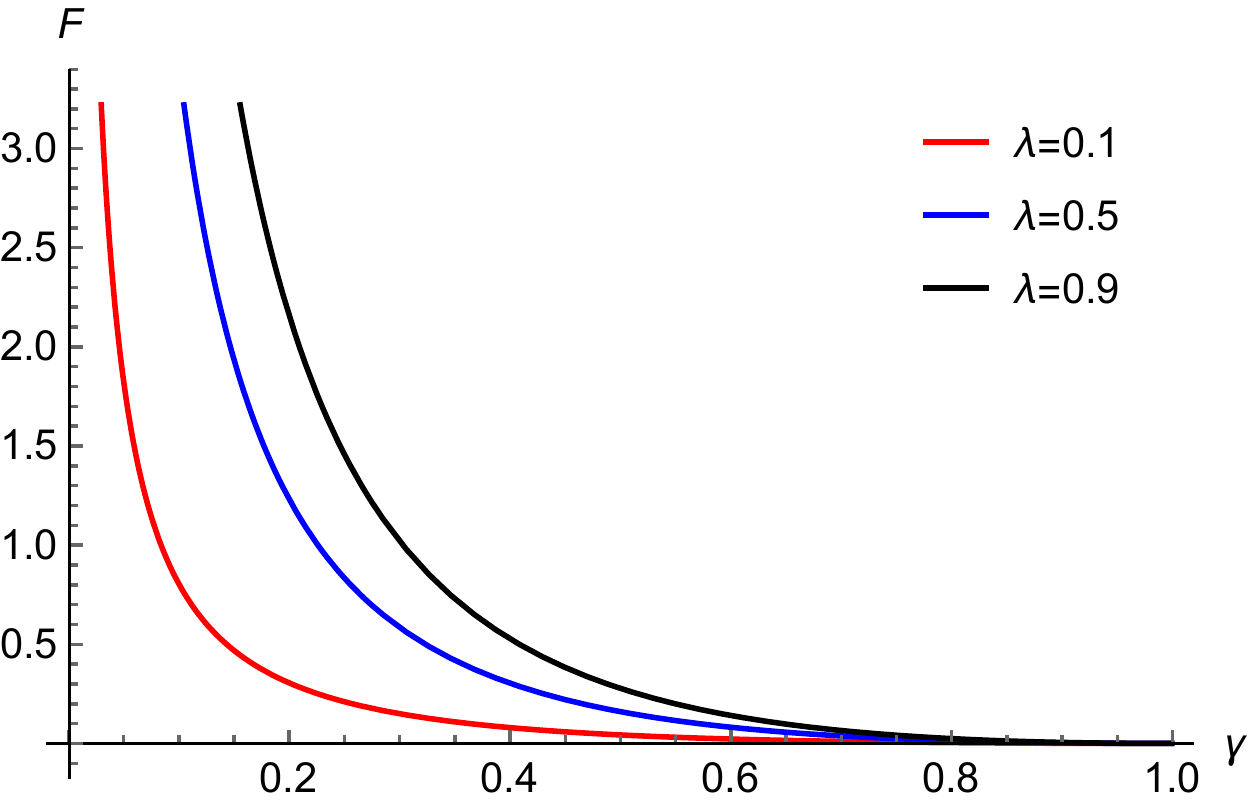}
    \caption{Illustration of $F(\lambda)$ for two exponential distributions as in Example~\ref{ex:exp} with $\gamma_0=\gamma$ and $\gamma_1=1$}
    \label{fig:2}
  \end{subfigure}
  \caption{Illustration of the second-order exponents under the expectation constraint }
  \end{figure}

\begin{example}[Two Exponentials] \label{ex:exp} Let $\gamma_0$ and $\gamma_1$ be two positive real numbers such that $\gamma_0<\gamma_1$. Let $p_0(x)=\gamma_0 e^{-\gamma_0 x}$ and $p_1(x)=\gamma_1 e^{-\gamma_1 x}$ for $x>0$. Then $Y_k=(\gamma_1-\gamma_0)X_k+\log \frac{\gamma_0}{\gamma_1}$ and $S_k=(\gamma_1-\gamma_0)\sum_{i=1}^{k}X_i+k\log \frac{\gamma_0}{\gamma_1}$. Let $x \in [0,\infty)\mapsto U(x;k,\gamma)$ denote the cumulative distribution function of the Erlang distribution with parameters $(k,\gamma)$.

In~\cite[Section~3.1.6]{SequentialAnalysis}, by using   tools which we will not discuss here,  exact formulas for $A(P_0,P_1)$, $\tilde{A}(P_0,P_1), B(P_0,P_1)$ and $ \tilde{B}(P_0,P_1) $ for this example concerning exponential distributions were derived. Here, we use  Theorem~\ref{computation}  together with the fact that the sum of $k$ i.i.d.\ exponential random variables with parameter $\gamma$ is the Erlang distribution with parameters $(k,\gamma)$  to derive a formula  for the second-order term in~\eqref{thm2}.
%Let $u(x;k,\gamma)=\frac{\gamma^k x^{k-1} e^{-\gamma x}}{(k-1)!}$ for $x>0$, where $k$ is a positive integer and $\gamma$ is a positive real number; that is, $u(x;k,\gamma)$ is the probability density function of the {\em Erlang distribution} with parameters $(k,\gamma)$. 
We can derive the following important quantities: %from~\cite[Section 3.16, pp.~142]{SequentialAnalysis}, we know that 
\begin{itemize}
%\item when $\gamma>1$, then under the hypothesis $H_0$, the probability density function of $\bar{R}$, $\bar{r}(x)=(\gamma-1)e^{-(\gamma-1)x}$;
\item $D(P_0\|P_1)=\log\frac{\gamma_0}{\gamma_1}+\frac{\gamma_1-\gamma_0}{\gamma_0}$ and $D(P_1\|P_0)=\log\frac{\gamma_1}{\gamma_0}+\frac{\gamma_0-\gamma_1}{\gamma_1}$;
\item $\E_{P_0}\left[\log^2\frac{p_0(X_1)}{p_1(X_1)}\right]=\Big(\log\frac{\gamma_0}{\gamma_1}+\frac{\gamma_1-\gamma_0}{\gamma_0} \Big)^2+\frac{(\gamma_0-\gamma_1)^2}{\gamma_0^2}$ and  \\ $\E_{P_1}\left[\log^2\frac{p_0(X_1)}{p_1(X_1)}\right]=\Big(\log\frac{\gamma_1}{\gamma_0}+\frac{\gamma_0-\gamma_1}{\gamma_1} \Big)^2+\frac{(\gamma_1-\gamma_0)^2}{\gamma_1^2}$;
\item $P_{0}(S_k <0) + P_1(S_k>0)\!=\!1+U\left(\frac{k}{\gamma_1-\gamma_0}\log\frac{\gamma_1}{\gamma_0};k,\gamma_0\right)-U\left(\frac{k}{\gamma_1-\gamma_0}\log\frac{\gamma_1}{\gamma_0};k,\gamma_1\right);$ %\textcolor{red}{this cannot be correct! the first arguments $\frac{k}{\gamma_0-\gamma_1}\log\frac{\gamma_1}{\gamma_0}$ are negative.}
\item$\E_{P_0}\left[S_k^{-}\right]=k U\left(\frac{k}{\gamma_1-\gamma_0}\log\frac{\gamma_1}{\gamma_0};k,\gamma_0\right)\log\frac{\gamma_1}{\gamma_0}\\-\frac{k(\gamma_1-\gamma_0)}{\gamma_0}U\left(\frac{k}{\gamma_1-\gamma_0}\log\frac{\gamma_1}{\gamma_0};k+1,\gamma_0\right)$ and\vspace{.05in} \\   
$\E_{P_1}\left[S_k^{+}\right] = k\left(1 \! -\! U\left(\frac{k}{\gamma_1-\gamma_0}\log\frac{\gamma_1}{\gamma_0};k,\gamma_1\right)\right)\log\frac{\gamma_0}{\gamma_1}\\+\frac{k(\gamma_1-\gamma_0)}{\gamma_1}\left(1\! -\!  U\left(\frac{k}{\gamma_1-\gamma_0}\log\frac{\gamma_1}{\gamma_0};k + 1,\gamma_1\right)\right)$.
\end{itemize}
Using these facts, together with Theorems~\ref{expectation-thm} and~\ref{computation}, we can numerically compute the second-order exponent under the expectation constraint. This is illustrated in Figure~\ref{fig:2} for various $\lambda$'s. 
\end{example} 
%\begin{figure}[h]
%\centering
%\end{figure}
%Therefore, we have that 
%$$A(P_0,P_1)= \begin{cases}
%\frac{1}{\gamma-1}&\gamma>1\\\frac{1-\gamma+\gamma\log\gamma}{|1-\gamma+\log\gamma|-1+\gamma-\gamma\log\gamma}&0<\gamma<1\end{cases},\quad \tilde{A}(P_0,P_1)= \begin{cases}
%\frac{\gamma-1-\log\gamma}{2-2\gamma+\gamma\log\gamma+\log\gamma}&\gamma>1\\\frac{1-\gamma}{\gamma}&0<\gamma<1\end{cases},$$
%$$B(P_0,P_1)= \begin{cases}
%-\log\gamma&\gamma>1\\\log\frac{1-\gamma+\gamma\log\gamma}{|1-\gamma+\log\gamma|}&0<\gamma<1\end{cases},\quad\mbox{and}\quad \tilde{B}(P_0,P_1)= \begin{cases}
%\log\frac{\gamma-1-\log\gamma}{1-\gamma+\gamma\log\gamma}&\gamma>1\\\log\gamma&0<\gamma<1\end{cases}.$$
%One checks that
%\begin{itemize}
%\item $Y_k=(\lambda-1)X_k-\log \lambda$ and $S_n=(\lambda-1)\sum_{k=1}^{n}X_k-n\log \lambda$;
%\item $D(P_0\|P_1)=\lambda-1-\log\lambda$ and $D(P_1\|P_0)=\frac{1}{\lambda}-1-\log\frac{1}{\lambda}$;
%\item $\E_{P_0}\left[\log^2\frac{p_0(X_1)}{p_1(X_1)}\right]=2(\lambda-1)^2+\log^2\lambda-2(\lambda-1)\log\lambda$ and  $\E_{P_1}\left[\log^2\frac{p_0(X_1)}{p_1(X_1)}\right]=2\left(\frac{1}{\lambda}-1\right)^2+\log^2\lambda-2\left(\frac{1}{\lambda}-1\right)\log\frac{1}{\lambda}$.
%\end{itemize}

%From Theorems~\ref{probabilistic-thm} and~\ref{expectation-thm}, we know that the convergence speed under the expectation constraint is faster than that under the probabilistic constraint.

\section{Proof of Theorem~\ref{probabilistic-thm}} \label{sec:prf_prob}
In this section we prove Theorem~\ref{probabilistic-thm} by establishing upper and lower bounds on error probabilities, respectively. Throughout this section we use $\chi_E$ to denote the indicator function taking the value $1$ on the set $E$ (and $0$ otherwise). In the proof, we use the {\em Berry-Esseen theorem for the maximal sum of independent random variables}~\cite{CLTmaximal}. For ease of reference, we restate it here as follows.
\begin{thm}[Rogozin~\cite{CLTmaximal}]\label{CLTmaximal}
Let $X=\{X_i\}_{i=1}^{\infty}$ be an i.i.d.\ sequence of random variables such that $\E[X_1]>0$ and $\E[|X_1|^3]<\infty$. Let $S_k=\sum_{i=1}^{k}X_i$. Then for all $n\ge 1$,
\begin{align*}
 \sup_{a\in\mathbb{R}}\left|{P\bigg(\frac{(\max_{1\le k\le n}S_k)-n\E[X_1]}{\sqrt{n\mathrm{Var}(X_1)}}\le a\bigg)-\Phi(a)}\right|\le \frac{M}{\sqrt{n}},
\end{align*}
where $M$ is a constant depending only on $\mathrm{Var}(X_1)$ and $\E[|X_1|^3]$.
\end{thm}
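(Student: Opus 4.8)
The plan is to exploit the positive drift $\mu := \E[X_1] > 0$, which forces the running maximum $M_n := \max_{1\le k\le n} S_k$ to be attained near the endpoint $k=n$, so that $M_n$ is only a small perturbation of $S_n$, for which the classical one-dimensional Berry--Esseen theorem already gives the rate $O(1/\sqrt n)$. Concretely, I would write $M_n = S_n + R_n$, where $R_n := \max_{1\le k\le n}(S_k - S_n) = \max_{0\le j\le n-1}(-S'_j) \ge 0$ is the running maximum of the time-reversed walk $S'_j = \sum_{i=1}^j X_{n+1-i}$, which has \emph{negative} drift $-\mu$. By the standard theory of random walks with negative drift, $R_n$ is stochastically dominated by the almost-surely finite all-time maximum $R_\infty$, and under the third-moment hypothesis the Kiefer--Wolfowitz moment bound gives $\E[R_\infty^2] < \infty$, hence the uniform tail estimate $P(R_n > t) = O(t^{-2})$. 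This is the quantitative sense in which $M_n - S_n$ is $O_P(1)$ and does not grow with $n$.

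Writing $x = n\mu + a\,\sigma\sqrt n$ with $\sigma^2 = \mathrm{Var}(X_1)$, the goal is to control, uniformly in $a$, the correction term in the identity
\[
P(M_n \le x) = P(S_n \le x) - P(S_n \le x,\, M_n > x).
\]
The first term is handled by ordinary Berry--Esseen. For the second, the most direct route is a smoothing/sandwich argument: since $\{S_n \le x,\, M_n > x\} = \{R_n > x - S_n \ge 0\}$, one bounds it by $P(R_n > \eps\sigma\sqrt n)$ plus an anticoncentration term of size $O(\eps)$ coming from shifting the Gaussian CDF. Optimizing the tradeoff against $P(R_n > \eps\sigma\sqrt n) = O(\eps^{-2} n^{-1})$ only yields $\eps \asymp n^{-1/3}$ and hence the suboptimal rate $O(n^{-1/3})$; the third moment alone is not enough to push a crude sandwich to $O(n^{-1/2})$.

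Obtaining the sharp $1/\sqrt n$ therefore requires estimating the correction term directly rather than through the size of $R_n$. I would decompose by the first-passage epoch $\tau := \inf\{k : S_k > x\}$ and its overshoot $\rho := S_\tau - x \ge 0$: conditioning on $\{\tau = k,\rho\}$ and using that the post-$\tau$ increments are independent of the pre-$\tau$ history,
\[
P(S_n \le x,\, M_n > x) = \sum_{k=1}^{n-1} \E\!\left[\chi_{\{\tau=k\}}\, P\big(S_{n-k}' \le -\rho \mid \rho\big)\right],
\]
where $S'_{n-k}$ is an independent copy of $S_{n-k}$. Because $\tau$ concentrates (with fluctuations $O(\sqrt n)$) around the crossing time $x/\mu$, and because each inner probability is a deviation of a positive-drift walk below a nonnegative level, one uses a local/integral central limit estimate together with the integrability of the overshoot $\rho$ (guaranteed by the moment hypothesis) to show the sum is $O(1/\sqrt n)$ uniformly in $a$. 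An alternative, arguably cleaner, analytic route is to combine Esseen's smoothing inequality with the Pollaczek--Spitzer factorization identity for the characteristic function of $M_n$, showing that the ladder-height corrections perturb the Gaussian characteristic function only by an $O(1/\sqrt n)$ amount over the relevant frequency range.

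The main obstacle is exactly this sharp, threshold-uniform control of the correction term: the easy decomposition $M_n = S_n + R_n$ with tail bounds gives only $O(n^{-1/3})$, so the crux is the more delicate first-passage (or Fourier/Wiener--Hopf) analysis that upgrades the rate to the optimal $O(1/\sqrt n)$ while keeping the bound uniform in the level $a$ and controlling the overshoot distribution using only the finite third moment.
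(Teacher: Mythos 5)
Your proposal should first be set against what the paper actually does: the paper does not prove Theorem~\ref{CLTmaximal} at all. It is quoted, with attribution, from Rogozin~\cite{CLTmaximal} as an off-the-shelf auxiliary tool (``for ease of reference, we restate it here''), and the paper's own arguments use it as a black box. So there is no in-paper proof to compare against; your attempt has to be judged as a reconstruction of Rogozin's theorem itself.

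As such a reconstruction it is an accurate roadmap but not a proof, and you have (commendably) diagnosed the gap yourself. The preparatory steps are sound: the decomposition $M_n = S_n + R_n$ with $R_n$ the running maximum of the time-reversed negative-drift walk, the Kiefer--Wolfowitz bound giving $\E[R_\infty^2]<\infty$ from the third-moment hypothesis, and the correct observation that the sandwich with $P(R_n>\eps\sigma\sqrt n)=O(\eps^{-2}n^{-1})$ caps out at the suboptimal rate $O(n^{-1/3})$. The first-passage decomposition is also the right next move. But the decisive claim --- that $\sum_{k=1}^{n-1}\E\bigl[\chi_{\{\tau=k\}}\,P(S'_{n-k}\le-\rho\mid\rho)\bigr]=O(1/\sqrt n)$ uniformly in the level --- is asserted, not established, and the tools you name do not deliver it. Berry--Esseen gives $P(S_m\le 0)\le C/\sqrt m$, which handles the range $k\le n/2$ after splitting the sum, but in the critical window where $k$ is within $O(1)$ of $n$ (which is exactly the dominant regime when $a=O(1)$), the inner probability $P(S_{n-k}\le-\rho)$ is of constant order, so overshoot integrability buys nothing there; what is needed is a local anticoncentration bound of the form $P(\tau\in[n-c,n])=O(1/\sqrt n)$ for the first-passage time. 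That is a local-limit-type statement about $\tau$, it does not follow from the integral CLT concentration of $\tau$ that you invoke, and proving it with only a finite third moment, uniformly in $a$, and without lattice assumptions is precisely the technical content of Rogozin's paper (and of Nagaev's related work on maxima of partial sums). Your Fourier alternative (Esseen smoothing plus Pollaczek--Spitzer factorization) has the same status: the required $O(1/\sqrt n)$ perturbation bound on the characteristic function of $M_n$ over the relevant frequency range is named but never derived. In short: correct skeleton, honest and correct identification of the crux, crux not carried out.
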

\subsection{Upper Bound on the Error Probabilities}
For any $\eta \in(0,\eps)$, let $c_0=-\sqrt{V(P_1\|P_0)}\Phi^{-1}(\eps-\eta)$ and $c_1=-\sqrt{V(P_0\|P_1)}\Phi^{-1}(\eps-\eta)$. Let $\alpha_n=n(D(P_1\|P_0)-\frac{c_0}{\sqrt{n}})$ and $\beta_n=n(D(P_0\|P_1)-\frac{c_1}{\sqrt{n}})$ and let $(\delta_n,T_n)$ be the SPRT with parameters $(\alpha_n,\beta_n)$. As $0<D(P_i\|P_{1-i})<\infty$ for $i=0,1$, we know from~\cite[Lemma~3.1, pp.~29]{nonlinearrenewaltheory} that $P_i(T_n<\infty)=1$ for each $n\in\mathbb{N}$. Then using~\cite[Theorem~1.1, pp.~4] {nonlinearrenewaltheory}, we have that
\begin{align}\label{upper}
P_{1|0}(\delta_n,T_n)&=P_0(S_{T_n}<-\alpha_n)\notag\\
&=\E_{P_1}\big[\chi_{\{S_{T_n}<-\alpha_n\}}e^{S_{T_n}} \big]\le e^{-\alpha_n}.
\end{align}
Similarly, we have $P_{0|1}(\delta_n,T_n)\le e^{-\beta_n}.$ Note that
\begin{align}
P_0(T_n>n)&=P_0\left(\alpha_n<\min_{1\le i\le n}S_i\le \max_{1\le i\le n}S_i\le \beta_n\right)\notag\\
&\le P_0\left( \max_{1\le i\le n}S_i\le \beta_n\right)\notag\\
&\le\Phi\bigg(\frac{-c_1}{\sqrt{V(P_0\|P_1)}}\bigg)+\frac{M}{\sqrt{n}}\label{im5}\\
&=\eps-\eta+\frac{M}{\sqrt{n}},\notag
\end{align}
where~(\ref{im5}) holds due to Theorem~\ref{CLTmaximal}  ($X_i$ in Theorem~\ref{CLTmaximal} is taken to be the log-likelihood ratio $\log\frac{p_0(X_i)}{p_1(X_i)}$) noting that $D(P_0\|P_1)>0$  and~$M$ is a positive finite constant due to the finiteness of the third absolute moments of the log-likelihood ratios as assumed in~(\ref{thirdmoment}).

Therefore for sufficiently large $n$, $P_0(T_n>n)\le \eps-\eta/2<\eps$.
Similarly, $P_1(T_n>n)< \eps$ for sufficiently large $n$. Hence we obtain that
\begin{align*}
\underline{G}(\lambda,\eps)&=\liminf_{n\to\infty}  G_n(\lambda,\eps)\\
&\ge \lambda \sqrt{V(P_1\|P_0)}\Phi^{-1}(\eps-\eta) \\
&\qquad+(1-\lambda)\sqrt{V(P_0\|P_1)}\Phi^{-1}(\eps-\eta),
\end{align*}
which further implies  by  
letting $\eta\to 0^{+}$ that
\begin{align*}
\underline{G}(\lambda,\eps)&\ge \lambda \sqrt{V(P_0\|P_1)}\Phi^{-1}(\eps)\\
&\hspace{1.5cm}+(1-\lambda)\sqrt{V(P_1\|P_0)}\Phi^{-1}(\eps),
\end{align*}
as desired.
\subsection{Lower Bound on the Error Probabilities}
In this section, we derive the lower bound on the error probabilities. First,  we slightly generalize~\cite[Lemma~9.2]{WuPolyanskiy}.

\begin{lemma}\label{probabilisticerror} 
Let $(\delta,T)$ be an SHT such that $\min_{i=0,1}P_i(T<\infty)=1$. Then for any set $E\in \mathcal{F}_T$ and $\gamma>0$, we have $P_0(E)-\gamma P_1(E)\le P_{0}(S_T\ge \log \gamma)$.
\end{lemma}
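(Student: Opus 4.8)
The plan is to reduce the inequality to a single change-of-measure identity combined with an elementary pointwise bound. The crucial observation is that $S_T$ is exactly the stopped log-likelihood ratio, so that $e^{S_T}=\prod_{k=1}^{T}\frac{p_0(X_k)}{p_1(X_k)}$ is the Radon--Nikodym derivative of $P_0^T$ with respect to $P_1^T$ on $\mathcal{F}_T$. Concretely, since $\min_{i=0,1}P_i(T<\infty)=1$, Wald's likelihood ratio identity---the same change of measure already invoked in~\eqref{upper}---gives, for every $A\in\mathcal{F}_T$,
\begin{align}
P_0(A)=\E_{P_1}\big[\chi_A\, e^{S_T}\big].\notag
\end{align}

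First I would apply this identity with $A=E$ to collapse the left-hand side into a single $P_1$-expectation,
\begin{align}
P_0(E)-\gamma P_1(E)=\E_{P_1}\big[\chi_E\big(e^{S_T}-\gamma\big)\big].\notag
\end{align}
Next, since $0\le\chi_E\le 1$ and the positive part dominates the identity, I would bound the integrand pointwise by discarding $\chi_E$ and retaining only the positive part, i.e.\ $\chi_E(e^{S_T}-\gamma)\le (e^{S_T}-\gamma)^{+}$. Because $\gamma>0$, this positive part is supported exactly on the event $\{S_T\ge\log\gamma\}$, and there it is at most $e^{S_T}$; hence $(e^{S_T}-\gamma)^{+}\le e^{S_T}\chi_{\{S_T\ge\log\gamma\}}$.

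Finally, the event $\{S_T\ge\log\gamma\}$ lies in $\mathcal{F}_T$, so I would invoke the same likelihood ratio identity in the reverse direction with $A=\{S_T\ge\log\gamma\}$ to obtain $\E_{P_1}\big[e^{S_T}\chi_{\{S_T\ge\log\gamma\}}\big]=P_0(S_T\ge\log\gamma)$, which is precisely the asserted right-hand side. Chaining the three displays then yields the claim.

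The only genuinely delicate point is the validity of the change-of-measure identity evaluated at the stopping time $T$ rather than at a fixed horizon $n$. This is exactly where the hypothesis $\min_{i=0,1}P_i(T<\infty)=1$ enters: it guarantees that $S_T$ is well defined $P_i$-almost surely and that no probability mass escapes to $\{T=\infty\}$, so that $P_0^T\ll P_1^T$ with density $e^{S_T}$ on $\mathcal{F}_T$. Since this identity is already established in the excerpt (it underlies~\eqref{upper}) and is cited from the renewal-theory reference, I would simply invoke it; the remaining steps are elementary pointwise estimates and require no further machinery.
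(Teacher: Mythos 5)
Your proof is correct and is essentially the paper's argument in mirror image: both hinge on the likelihood-ratio identity on $\mathcal{F}_T$ followed by truncation to the event $\{S_T\ge\log\gamma\}$. The paper stays under $P_0$, using $\frac{\mathrm{d}P_1^{T}}{\mathrm{d}P_0^{T}}=e^{-S_T}$ (valid since $P_1(T<\infty)=1$) and bounding $\chi_E(1-\gamma e^{-S_T})\le \chi_{\{S_T\ge\log\gamma\}}$, whereas you change measure to $P_1$ with density $e^{S_T}$ (which uses $P_0(T<\infty)=1$, also granted by the hypothesis) and apply the positive-part bound $(e^{S_T}-\gamma)^{+}\le e^{S_T}\chi_{\{S_T\ge\log\gamma\}}$ --- an equivalent computation step for step.
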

\begin{proof}
As $P_1(T<\infty)=1$, then from~\cite[Theorem~1.1, pp.~4]{nonlinearrenewaltheory} it follows that $P_1^{T}$ is absolutely continuous with respect to $P_0^{T}$ and the Radon-Nikodym derivative $\frac{\mathrm{d}P_1^{T}}{\mathrm{d}P_0^{T}}=e^{-S_T}$.
Therefore we have that for any $E\in\mathcal{F}_T$, $P_1(E)=\E_{P_0}[\chi_{E}e^{-S_T}]$.  
 This further implies that
\begin{align}
P_0(E)-\gamma P_1(E)&=\E_{P_0}[\chi_{E}]-\gamma\E_{P_1}[\chi_{E}]\notag\\
&=\E_{P_0}[\chi_{E}]-\gamma \E_{P_0}\left[\chi_{E}e^{-S_T}\right]\notag\\
&= \E_{P_0}\left[\chi_{E}(1-\gamma e^{-S_T}) \right]\notag\\
&\le  \E_{P_0} \left[\chi_{E\cap \{S_T\ge \log\gamma\}}(1-\gamma e^{-S_T}) \right]\notag\\
&\le \E_{P_0}\left[\chi_{E\cap \{S_T\ge \log\gamma\}} \right]\notag\\
&\le P_{0}(S_T\ge \log \gamma),\notag
\end{align}
as desired.
\end{proof}
\begin{remark}
In~\cite[Lemma 9.2]{WuPolyanskiy}, the authors proved a similar result for the {\em fixed-length} hypothesis testing problem.
\end{remark}
Let $\{(\delta_n, T_n)\}_{n=1}^{\infty}$ be a sequence of SHTs such that $\max_{i=0,1}P_i({T_n>n})\le \eps$ and let $A_i (T_n)=\{\delta_n=i\}$ for $i=0,1$. Then $P_{1|0}(\delta_n, T_n)=P_{0}(A_1(T_n))$ and $P_{0|1}(\delta_n, T_n)=P_1(A_0 (T_n))$. Using Lemma~\ref{probabilisticerror} with $E=A_0 (T_n)$, we have that 
 \begin{align}
&1-P_{1|0}(\delta_n, T_n)-\gamma P_{0|1}(\delta_n, T_n)\notag\\
&\hspace{0.5cm}\le P_{0}(S_{T_n}\ge \log \gamma)\notag\\
&\hspace{0.5cm}\le P_{0}(S_{T_n}\ge \log \gamma, T_n\le n)+P_0(T_n>n)\notag,
\end{align}
which further implies that
\begin{align}\label{im4}
P_0(T_n>n)&\ge 1-P_{1|0}(\delta_n, T_n)-\gamma P_{0|1}(\delta_n, T_n)\notag\\
&\hspace{0.65cm}- P_{0}(S_{T_n}\ge \log \gamma, T_n\le n)\notag\\
&\ge 1-P_{1|0}(\delta_n, T_n)-\gamma P_{0|1}(\delta_n, T_n)\notag\\
&\hspace{0.65cm}- P_{0}\Big(\max_{1\le i\le n}S_i\ge \log \gamma \Big)\notag\\
&\ge 1-P_{1|0}(\delta_n, T_n)-\gamma P_{0|1}(\delta_n, T_n)-\frac{M}{\sqrt{n}}\notag\\
&\hspace{0.65cm}-\left(1-\Phi  \bigg(\frac{\log \gamma-n D(P_0\|P_1)}{\sqrt{n V(P_0\|P_1)}}\bigg)\right)\\
&\ge -P_{1|0}(\delta_n, T_n)-\gamma P_{0|1}(\delta_n, T_n)-\frac{M}{\sqrt{n}}\notag\\
&\hspace{0.5cm}+\Phi \bigg(\frac{\log \gamma-n D(P_0\|P_1)}{\sqrt{n V(P_0\|P_1)}}\bigg),\notag
\end{align}
where~(\ref{im4}) follows from Theorem~\ref{CLTmaximal} and $M$ is the constant in Theorem~\ref{CLTmaximal}. Again, $M$ is finite because we assume~(\ref{thirdmoment}) holds.

Let $\log\gamma= n\big(D(P_0\|P_1)+\Phi^{-1}(\eps+\frac{2M}{\sqrt{n}})\sqrt{\frac{V(P_0\|P_1) }{n}} \big)  $. As $P_{0}(T_n>n)\le \eps$, we then have that
\begin{align*}
\eps&\ge P_{0}(T_n>n)\\
&\ge -P_{1|0}(\delta_n, T_n)-\gamma P_{0|1}(\delta_n, T_n)\\
&\hspace{0.5cm}+\Phi \bigg(\frac{\log \gamma-n D(P_0\|P_1)}{\sqrt{n V(P_0\|P_1)}}\bigg)-\frac{M}{\sqrt{n}}\\
&\ge -P_{1|0}(\delta_n, T_n)-\gamma P_{0|1}(\delta_n, T_n)+\eps+\frac{M}{\sqrt{n}},
\end{align*}
which implies that
\begin{align*}
&\frac{1}{n}\log P_{0|1}(\delta_n, T_n)\ge -\frac{\log\gamma}{n}+\frac{\log \left(\frac{M}{\sqrt{n}}-P_{1|0}(\delta_n, T_n)\right)}{n}\notag\\
&=-D(P_0\|P_1)-\sqrt{\frac{V(P_0\|P_1)}{n}}\Phi^{-1}\left(\eps+\frac{2M}{\sqrt{n}}\right)\\
&\hspace{0.5cm}+\frac{\log \left(\frac{M}{\sqrt{n}}-P_{1|0}(\delta_n, T_n)\right)}{n}\notag\\
&=-D(P_0\|P_1)-\sqrt{\frac{V(P_0\|P_1)}{n}}\Phi^{-1}(\eps)\\
&\hspace{0.5cm}+\frac{\log \left(\frac{M}{\sqrt{n}}-P_{1|0}(\delta_n, T_n)\right)}{n}+O\bigg(\frac{1}{n}\bigg).
\end{align*}
Let $\log t_n=-n(\min\{D(P_0\|P_1),D(P_1\|P_0)\}/2)$ and 
$$
\mathcal{T}^{(n)}=\left\{(\delta_n,T_n): \begin{array}{c}\max_{i=0,1}P_i({T_n>n})\le \eps\\ P_{1|0}(\delta_n, T_n)\le t_n\\
P_{0|1}(\delta_n, T_n)\le t_n\end{array}\right\}.
$$ For any $(\delta_n,T_n)\in \mathcal{T}^{(n)}$, we know $P_{1|0}(\delta_n,T_n)$ tends to zero exponentially fast, hence we have that 
$$\lim_{n\to\infty}\frac{\log \left(\frac{M}{\sqrt{n}}-P_{1|0}(\delta_n, T_n)\right)}{\sqrt{n}}= 0.$$ Thus it then follows that for any $(\delta_n,T_n)\in \mathcal{T}^{(n)}$,
\begin{align}\label{im8}
&\limsup_{n\to\infty}\sqrt{n}\left(-\frac{1}{n}\log P_{0|1}(\delta_n, T_n)-D(P_0\|P_1)\right)\notag\\
&\hspace{3.6cm}\le \sqrt{V(P_0\|P_1)}\Phi^{-1}(\eps).
\end{align}
Similarly, we have that for any $(\delta_n,T_n)\in \mathcal{T}^{(n)}$,
\begin{align}\label{im9}
&\limsup_{n\to\infty}\sqrt{n}\left(-\frac{1}{n}\log P_{1|0}(\delta_n, T_n)-D(P_1\|P_0)\right)\notag\\
&\hspace{3.6cm}\le \sqrt{V(P_1\|P_0)}\Phi^{-1}(\eps).
\end{align}
From~(\ref{upper}) we know that for sufficiently large $n$, the sequence of exponents of the SHTs that approaches the supremum in~(\ref{probabilistic1}) is lower bounded by $\min\{D(P_0\|P_1),D(P_1\|P_0)\}/2$.  By the (strict) positivity of the relative entropies, it follows that for sufficiently large $n$, the sequence of SHTs that approaches the supremum in~(\ref{probabilistic1}) belongs to $\mathcal{T}^{(n)}$. Therefore, combining~(\ref{im8}) and~(\ref{im9}), we have 
\begin{align*}
\overline{G}(\lambda,\eps)&=\limsup_{n\to\infty}{G}_n(\lambda,\eps)\\
&\le \!\lambda \sqrt{V(P_0\|P_1)}\Phi^{-1}(\eps)  +(1\!-\!\lambda)\sqrt{V(P_1\|P_0)}\Phi^{-1}(\eps),
\end{align*}
as desired.

\section{Proof of Theorem~\ref{expectation-thm}}\label{sec:prf_exp}
In this section, we first present the tools used to prove Theorem~\ref{expectation-thm} in Subsection~\ref{tools}. We then establish upper and lower bounds on the error probabilities in Subsections~\ref{upperexpectation} and~\ref{lowerexpectation}, respectively.

%From Wald-Wolf theorem~\cite{WaldWolf}, it is known that the sequential probability ratio test (SPRT) is optimal. That is, there exists $\alpha_n$, $\beta_n$ and $T$ such that SPRT defined by those paremeters achieves the previous miminization problem. From Wald-Wolf theorem, we know that $(D(P_1\|P_0),D(P_0\|P_1))$ is the optimal achievable exponent. 
\subsection{Auxiliary Tools}\label{tools}
In the proof of Theorem~\ref{expectation-thm}, we use the following results on the asymptotics of the first passage time from~\cite{HeydeAsymptotics, nonlinearrenewaltheory,GutAsymptotics}. For ease of reference, we include the results as follows. Theorem~\ref{asymptotic-nonlattice} characterizes the asymptotics of the {\em first passage times} of a stochastic process. 

%A random variable $Y$ is said to have a lattice distribution if there exists $a$ and $h$ such that $P(Y\in\{a+nh:n\in\mathbb{Z}\})=1$; otherwise $X$ is said to have a non-lattice distribution. When $Y_1$ has a lattice distribution, $\{\alpha_i\}$ and $\{\beta_i\}$ is chosen to be of the form $\{m(i)a+n(i)h\}$ where both $m(i)$ and $n(i)$ go to $\infty$ as $i\to\infty$.

Let $\{\alpha_i\}_{i=1}^\infty$ and $\{\beta_i\}_{i=1}^\infty$ be two increasing sequences of positive real numbers such that $\alpha_i\to \infty$ and $\beta_i\to \infty$ as $i\to \infty$.   Let $(\delta_i,T_i)$ be an SPRT with parameters $(\alpha_i,\beta_i)$. Recall that $Y_i=\log\frac{p_0(X_i)}{p_1(X_i)}$, $S_k=\sum_{i=1}^{k}Y_i$, and $T_n=\inf\{k\ge 1:S_{k}\notin [-\alpha_n,\beta_n]\}$.   The following two theorems, taken from~\cite[Theorem 3.1, pp.~31]{nonlinearrenewaltheory}, characterize  the asymptotics of the expected sample size and  the two types of error probabilities.
 
\begin{thm}\label{asymptotic-nonlattice}
Assume that  $\max\{\E_{P_1} [Y_1^2],\E_{P_0} [Y_1^2]\}<\infty$ and $Y_1$ is non-arithmetic. Then as $n\to\infty$,
\begin{align}
 \E_{P_0}[T_n]=\frac{\beta_n}{D(P_0\|P_1)}+\frac{A(P_0,P_1)}{D(P_0\|P_1)}+o(1),\notag
\end{align}
and
\begin{align}
 \E_{P_1}[T_n]=\frac{\alpha_n}{D(P_1\|P_0)}+\frac{\tilde{A}(P_0,P_1)}{D(P_1\|P_0)}+o(1).\notag
\end{align}
 \end{thm}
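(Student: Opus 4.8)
The plan is to prove the first expansion, for $\E_{P_0}[T_n]$; the second follows by the symmetric argument applied to the walk $\{-S_k\}$, which under $H_1$ has positive drift $D(P_1\|P_0)$ and whose first passage is over the level $\alpha_n$. So I work under $H_0$, where $\E_{P_0}[Y_1]=D(P_0\|P_1)>0$ and $S_k$ is a positive-drift random walk. First I would reduce the computation of $\E_{P_0}[T_n]$ to that of the expected stopped log-likelihood ratio $\E_{P_0}[S_{T_n}]$ by Wald's identity. Since $T_n\le T(\beta_n)$ pathwise and the one-sided first-passage time $T(\beta_n)$ of a positive-drift walk has finite mean, we have $\E_{P_0}[T_n]<\infty$; together with $\E_{P_0}[|Y_1|]<\infty$ (implied by the second-moment hypothesis) this lets me apply Wald's identity to obtain $\E_{P_0}[S_{T_n}]=D(P_0\|P_1)\,\E_{P_0}[T_n]$. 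It therefore suffices to show $\E_{P_0}[S_{T_n}]=\beta_n+A(P_0,P_1)+o(1)$.

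Next I would decompose according to which boundary is crossed. Write $U_n=\{S_{T_n}>\beta_n\}$ for the upper-crossing event and $L_n=\{S_{T_n}<-\alpha_n\}$ for its complement (the type-I error event). On $U_n$ the SPRT stops exactly when the one-sided walk first exceeds $\beta_n$, so $T_n=T(\beta_n)$ and $S_{T_n}=\beta_n+R_n$, where $R_n=S_{T(\beta_n)}-\beta_n$ is the overshoot. Hence $\E_{P_0}[S_{T_n}\chi_{U_n}]=\beta_n P_0(U_n)+\E_{P_0}[R_n\chi_{U_n}]$. The change-of-measure bound already used in~\eqref{upper} gives $P_0(L_n)=\E_{P_1}[\chi_{L_n}e^{S_{T_n}}]\le e^{-\alpha_n}$, which is exponentially small, so $\beta_n P_0(U_n)=\beta_n(1-P_0(L_n))=\beta_n-o(1)$ because $\beta_n=\Theta(n)$ grows only linearly. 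For the overshoot term I would invoke the convergence in distribution $R_n\Rightarrow R$ established above from~\cite[Theorem~2.3]{nonlinearrenewaltheory} (with $\E[R]=A(P_0,P_1)$), together with uniform integrability of the family $\{R_n\}$, which holds under the finite-second-moment hypothesis on $Y_1$; since $\chi_{U_n}\to 1$ in probability, these combine to give $\E_{P_0}[R_n\chi_{U_n}]\to A(P_0,P_1)$.

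Finally I would show the error-event contribution $\E_{P_0}[S_{T_n}\chi_{L_n}]$ is $o(1)$. Bounding the lower overshoot by the magnitude of the last increment gives $|S_{T_n}|\chi_{L_n}\le(\alpha_n+|Y_{T_n}|)\chi_{L_n}$, so this term is controlled by $\alpha_n P_0(L_n)+\E_{P_0}[|Y_{T_n}|\chi_{L_n}]$. The first piece is $o(1)$ as before (exponential times linear), and the second I would handle by a Cauchy--Schwarz or moment estimate played against the exponentially small probability $P_0(L_n)$. Assembling the three estimates yields $\E_{P_0}[S_{T_n}]=\beta_n+A(P_0,P_1)+o(1)$, and dividing by $D(P_0\|P_1)$ gives the claimed expansion.

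I expect the main obstacle to be the uniform integrability of the overshoots $\{R_n\}$---that is, upgrading the distributional limit $R_n\Rightarrow R$ to a limit of \emph{means} using only the second-moment assumption---and, relatedly, the quantitative control of $\E_{P_0}[|Y_{T_n}|\chi_{L_n}]$ on the exponentially rare error event, where the stopped increment $Y_{T_n}$ is not a fixed random variable. Both points are where the renewal-theoretic machinery of~\cite{nonlinearrenewaltheory} does the real work; the rest is bookkeeping built on Wald's identity and the exponential smallness of the error probability.
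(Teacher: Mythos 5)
The paper does not actually prove this statement: Theorem~\ref{asymptotic-nonlattice} is imported verbatim from Woodroofe~\cite[Theorem~3.1, pp.~31]{nonlinearrenewaltheory}, so there is no in-paper argument to match. What you have written is, in essence, a correct reconstruction of the standard renewal-theoretic proof that sits behind that citation, and its skeleton is sound: $T_n\le T(\beta_n)$ pathwise gives $\E_{P_0}[T_n]<\infty$, so Wald's identity reduces the problem to $\E_{P_0}[S_{T_n}]$; on the upper-crossing event $U_n$ the two-sided and one-sided stopping times coincide, so the analysis reduces to the overshoot $R_n$; the limit $\E_{P_0}[R_n]\to A(P_0,P_1)$ does require upgrading $R_n\Rightarrow R$ to convergence of means, and your appeal to uniform integrability of the overshoot family is exactly right---under $\E_{P_0}[Y_1^2]<\infty$ this is Lorden's inequality, and you correctly identify it as where the cited machinery does the real work. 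Your Cauchy--Schwarz treatment of $\E_{P_0}[|Y_{T_n}|\chi_{L_n}]$ also goes through, since Wald's identity applied to $\{Y_k^2\}$ gives $\E_{P_0}[Y_{T_n}^2]\le \E_{P_0}[T_n]\,\E_{P_0}[Y_1^2]=O(\beta_n)$, which beats the exponentially small $P_0(L_n)\le e^{-\alpha_n}$. (A cleaner alternative for that term: change measure to $P_1$, under which $|S_{T_n}|e^{S_{T_n}}\chi_{L_n}=(\alpha_n+\tilde{R}_n)e^{-\alpha_n-\tilde{R}_n}\le(\alpha_n+e^{-1})e^{-\alpha_n}$, avoiding moment bounds entirely.)

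One caveat deserves flagging. The theorem as stated allows \emph{arbitrary} increasing sequences $\alpha_n,\beta_n\to\infty$, but your $o(1)$ estimates for the wrong-boundary contributions---$\beta_n P_0(L_n)=o(1)$ and the Cauchy--Schwarz step---invoke $\alpha_n,\beta_n=\Theta(n)$, which is not among the hypotheses. This is not mere bookkeeping: some cross-growth condition such as $\beta_n e^{-\alpha_n}\to 0$ is genuinely necessary, because the bottom-exit event has probability $\asymp e^{-\alpha_n}$ (indeed $P_{1|0}(\delta_n,T_n)e^{\alpha_n}\to e^{\tilde{B}(P_0,P_1)}>0$ by Theorem~\ref{asymptoticserrornonlattice}) and on it the test saves $\asymp(\beta_n+\alpha_n)/D(P_0\|P_1)$ samples relative to $T(\beta_n)$, so $\E_{P_0}[T_n]$ carries a correction of order $-\beta_n e^{-\alpha_n}$; taking, say, $\beta_n=e^{2\alpha_n}$ breaks the stated expansion. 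So your implicit assumption is in fact doing necessary work, and the theorem's statement for unrestricted sequences is silently relying on the same kind of condition. For the paper's purposes the point is harmless, since the theorem is only ever applied with $\alpha_n\sim nD(P_1\|P_0)$ and $\beta_n\sim nD(P_0\|P_1)$, exactly your regime; but if you intend your argument as a proof of the theorem as literally stated, you should either add the condition $\beta_n e^{-\alpha_n}+\alpha_n e^{-\beta_n}\to 0$ or note that it cannot be dispensed with.
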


\begin{thm}\label{asymptoticserrornonlattice}
%Assume that $\E[Y_1^2]<\infty$ and $Y_1$ is non-arithmetic. 
Assume that  $\max\{\E_{P_1} [Y_1^2],\E_{P_0} [Y_1^2]\}<\infty$ and $Y_1$ is non-arithmetic.
Then,
$$
\lim_{i\to\infty}P_{1|0}(\delta_i,T_i){e^{\alpha_i}}=e^{\tilde{B}(P_0,P_1)}$$
and
$$
\lim_{i\to\infty}P_{0|1}(\delta_i,T_i){e^{\beta_i}}=e^{B(P_0,P_1)}.\notag
$$
\end{thm}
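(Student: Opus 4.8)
The plan is to convert each (exponentially small) error probability into an expectation under the \emph{other}, ``typical'' hypothesis via a likelihood-ratio change of measure, so that the decay $e^{-\alpha_i}$ or $e^{-\beta_i}$ is pulled out explicitly and what remains is the expectation of an exponentiated boundary overshoot; the limit is then read off from the weak convergence of overshoots supplied by renewal theory. I would carry out the type-I claim in full, and obtain the type-II claim by the same argument after interchanging the roles of $(P_0,\alpha_i)$ and $(P_1,\beta_i)$.

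First I would invoke $\min_{j=0,1}P_j(T_i<\infty)=1$ together with the Radon--Nikodym identity recorded in the proof of Lemma~\ref{probabilisticerror}, namely $\mathrm{d}P_0^{T_i}/\mathrm{d}P_1^{T_i}=e^{S_{T_i}}$. Changing measure to $P_1$, under which $S_n$ has the negative drift $-D(P_1\|P_0)$ and hence exits almost surely through the lower boundary, gives
\begin{align}
P_{1|0}(\delta_i,T_i)=\E_{P_1}\big[\chi_{\{S_{T_i}<-\alpha_i\}}\,e^{S_{T_i}}\big].\notag
\end{align}
On $\{S_{T_i}<-\alpha_i\}$ the walk exits below, so $T_i=\tilde T(\alpha_i)$ and $S_{T_i}=-\alpha_i-\tilde R_i$ with $\tilde R_i\ge 0$ the overshoot; pulling out $e^{-\alpha_i}$ then yields $P_{1|0}(\delta_i,T_i)\,e^{\alpha_i}=\E_{P_1}[\chi_{\{\tilde T(\alpha_i)<T(\beta_i)\}}\,e^{-\tilde R_i}]$.

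It remains to let $i\to\infty$. Since $e^{-\tilde R_i}\in(0,1]$, dropping the indicator costs at most $P_1(T(\beta_i)<\infty)=P_1(\sup_n S_n>\beta_i)$, which vanishes as $\beta_i\to\infty$ by the negative drift, so the limit equals $\lim_i\E_{P_1}[e^{-\tilde R_i}]$. Here the non-arithmetic hypothesis enters: \cite[Theorem~2.3]{nonlinearrenewaltheory} gives $\tilde R_i\Rightarrow\tilde R$ under $P_1$, and since $x\mapsto e^{-x}$ is bounded and continuous on $[0,\infty)$, weak convergence transfers to the expectations, producing $\lim_i P_{1|0}(\delta_i,T_i)\,e^{\alpha_i}=\E[e^{-\tilde R}]=e^{\tilde B(P_0,P_1)}$. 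The mirror-image computation under $P_0$ (positive drift, exit through $\beta_i$) yields $P_{0|1}(\delta_i,T_i)\,e^{\beta_i}=\E_{P_0}[\chi_{\{T(\beta_i)<\tilde T(\alpha_i)\}}\,e^{-R_i}]\to\E[e^{-R}]=e^{B(P_0,P_1)}$.

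The step I expect to be most delicate is the passage to the limit inside the expectation, since weak convergence of the overshoots does not on its own force convergence of $\E[e^{-\tilde R_i}]$. What rescues the argument is that the integrand is \emph{uniformly bounded} in $[0,1]$ and the ``wrong-boundary'' contribution is asymptotically negligible under the typical measure, so no separate uniform-integrability estimate is needed and boundedness plus continuity of $e^{-x}$ suffice. The only other point to keep in view is that the whole scheme rests on the overshoot limit, which is precisely what the non-arithmetic assumption on $\log\frac{p_0(X_1)}{p_1(X_1)}$ secures and which would fail (the overshoot oscillating rather than converging) in the arithmetic case.
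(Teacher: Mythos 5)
Your proof is correct. Note that the paper does not actually prove Theorem~\ref{asymptoticserrornonlattice} at all---it imports it verbatim from \cite[Theorem~3.1, pp.~31]{nonlinearrenewaltheory}---so there is no in-paper argument to diverge from; what you have written is, in essence, the classical proof underlying the cited result: Wald's likelihood-ratio identity $P_0(E)=\E_{P_1}[\chi_E e^{S_{T_i}}]$ (the mirror image of the identity used in the paper's proof of Lemma~\ref{probabilisticerror}) to pull out $e^{-\alpha_i}$, discarding the wrong-boundary event at cost $P_1(\sup_n S_n>\beta_i)\to 0$ by the negative drift, and then passing to the limit via the overshoot convergence $\tilde{R}_i\Rightarrow\tilde{R}$ from \cite[Theorem~2.3]{nonlinearrenewaltheory}, where your observation that $e^{-x}$ is bounded and continuous correctly disposes of the uniform-integrability worry. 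Your argument has the added virtue of making visible exactly where the non-arithmetic hypothesis enters (the overshoot limit) and where finiteness and positivity of the divergences enter ($P_j(T_i<\infty)=1$ and the a.s.\ finiteness of $\sup_n S_n$ under $P_1$), which the paper leaves implicit in the citation.
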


The following lemma~\cite[Problem 3, pp.~369]{ferguson} characterizes the optimality of the SPRT. This lemma is also a simple consequence of Wald and Wolfowitz's work~\cite{WaldWolf}. %{\color{red}{(The second reviewer suggested reference 5 to this lemma. However, Wald-Wolf's result is different from the following lemma although it can be proved by using the method in their paper. Therefore we'd better not change the reference.)}}
\begin{lemma}\label{optimality}
Let $(\delta,T)$ be an SPRT. Let $(\tilde\delta,\tilde T)$ be any SHT such that
$$
\E_{P_0}[\tilde T]\le \E_{P_0}[T]\quad\mbox{and}\quad \E_{P_1}[\tilde T]\le \E_{P_1}[T].
$$
Then
$$
P_{0|1}(\delta,T)\le P_{0|1}(\tilde\delta,\tilde T)\quad\mbox{and}\quad P_{1|0}(\delta,T)\le P_{1|0}(\tilde\delta,\tilde T).
$$
\end{lemma}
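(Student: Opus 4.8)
The plan is to derive the lemma from the classical Wald--Wolfowitz optimality principle, which I would first establish in the ``primal'' form: \emph{if an SHT $(\delta',T')$ has $P_{1|0}(\delta',T')\le P_{1|0}(\delta,T)$ and $P_{0|1}(\delta',T')\le P_{0|1}(\delta,T)$, then $\E_{P_0}[T']\ge\E_{P_0}[T]$ and $\E_{P_1}[T']\ge\E_{P_1}[T]$}. To prove this I would embed the problem into a Bayesian sequential decision problem with prior $(\pi_0,\pi_1)$, unit sampling cost per observation, and losses $W_0,W_1>0$ for the two error types, whose Bayes risk is
\[
r(\delta',T')=\pi_0\big(\E_{P_0}[T']+W_0 P_{1|0}(\delta',T')\big)+\pi_1\big(\E_{P_1}[T']+W_1 P_{0|1}(\delta',T')\big).
\]
It is classical \cite{ferguson, WaldWolf} that for two simple i.i.d.\ hypotheses the Bayes test is an SPRT whose continuation region is a posterior-probability interval with endpoints depending only on $(W_0,W_1)$, corresponding to a threshold interval $[-\alpha,\beta]$ for $S_n$.

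The decisive point is that the \emph{fixed} SPRT $(\delta,T)$ of the statement can be realized as the Bayes test for \emph{every} prior $\pi_0\in(0,1)$: its thresholds determine posterior boundaries $\pi_U=\pi_1e^{\alpha}/(\pi_0+\pi_1e^{\alpha})$ and $\pi_L=\pi_1/(\pi_1+\pi_0e^{\beta})$ with $0<\pi_L<\pi_U<1$, which I would match by choosing $(W_0,W_1)$ appropriately (relying on surjectivity of the loss-to-boundary map). Bayes optimality then yields $r(\delta,T)\le r(\delta',T')$; since the competitor's smaller errors make the loss contribution to $r(\delta,T)-r(\delta',T')$ nonnegative, dropping it gives
\[
\pi_0\big(\E_{P_0}[T]-\E_{P_0}[T']\big)+\pi_1\big(\E_{P_1}[T]-\E_{P_1}[T']\big)\le 0\quad\text{for all }\pi_0\in(0,1).
\]
Letting $\pi_0\to1$ and $\pi_0\to0$ decouples this into the two primal inequalities; crucially the weights here are the prior probabilities themselves, not the (prior-dependent) losses, so the decoupling is clean.

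To obtain the lemma, write $(a,b)=(P_{1|0}(\delta,T),P_{0|1}(\delta,T))$ and let $(\tilde\delta,\tilde T)$ satisfy $\E_{P_i}[\tilde T]\le\E_{P_i}[T]$. I would first pass to an SPRT $(\delta^\circ,T^\circ)$ whose error pair equals that of $(\tilde\delta,\tilde T)$ (SPRTs, with terminal randomization, realize every achievable error pair); applying the engine to $(\delta^\circ,T^\circ)$ versus $(\tilde\delta,\tilde T)$ gives $\E_{P_i}[T^\circ]\le\E_{P_i}[\tilde T]\le\E_{P_i}[T]$. The lemma thus reduces to a statement about two SPRTs: \emph{if $\E_{P_i}[T^\circ]\le\E_{P_i}[T]$ for $i=0,1$, then the errors of $T^\circ$ are no smaller than those of $T$}. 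Suppose some error of $T^\circ$ were strictly smaller; running the engine a second time (now with $T$ optimal and $T^\circ$ as competitor) shows this forces the other error to be strictly larger, leaving only the ``crossed'' configuration (one error strictly smaller, the other strictly larger, with non-nested continuation regions), which is where the real work lies.

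The main obstacle is precisely this crossed case, which I would settle using the exact first-passage identities rather than the asymptotics alone. Wald's identities $D(P_0\|P_1)\,\E_{P_0}[T]=\E_{P_0}[S_T]$ and $D(P_1\|P_0)\,\E_{P_1}[T]=\E_{P_1}[-S_T]$, together with the decomposition of $S_T$ over the two exit events and control of the overshoot laws via nonlinear renewal theory (the quantities $A(P_0,P_1),\tilde A(P_0,P_1)$ and Theorem~\ref{asymptotic-nonlattice}), exhibit $\E_{P_0}[T]$ as governed chiefly by $\beta$ and $\E_{P_1}[T]$ chiefly by $\alpha$; this ``diagonally dominant'' dependence is what would force $\alpha^\circ\le\alpha$ and $\beta^\circ\le\beta$ from $\E_{P_i}[T^\circ]\le\E_{P_i}[T]$, after which the matching monotonicity of the error probabilities (Theorem~\ref{asymptoticserrornonlattice}, $P_{1|0}\asymp e^{-\alpha}$ and $P_{0|1}\asymp e^{-\beta}$) yields $P_{1|0}(\delta^\circ,T^\circ)\ge a$ and $P_{0|1}(\delta^\circ,T^\circ)\ge b$. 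I expect establishing this diagonal dominance \emph{exactly} for all finite thresholds---not merely in the large-threshold limit---to be the most delicate step; the finiteness of the expected sample sizes and the validity of the change-of-measure identity from the proof of Lemma~\ref{probabilisticerror} are needed throughout but follow routinely from the standing second-moment assumption.
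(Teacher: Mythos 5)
Your first stage---the Bayesian embedding, the matching of the SPRT's thresholds to posterior boundaries via the loss-to-boundary surjectivity, and the clean decoupling of the \emph{primal} Wald--Wolfowitz statement by letting $\pi_0\to 0,1$---is exactly the classical machinery behind the paper's citation; the paper itself offers no proof of this lemma (it points to Ferguson's Problem~3, p.~369, and to Wald--Wolfowitz), so up to that point you are reconstructing the intended argument correctly. The problem is everything after the reduction: the ``crossed'' configuration that you defer to a ``diagonal dominance'' argument is not merely the delicate step---it is where the plan collapses, because the coordinatewise conclusion you are trying to establish fails for arbitrary competitors. Take $\tilde T=1$ and $\tilde\delta\equiv 0$. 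This is a legitimate SHT under the paper's definitions, it satisfies $\E_{P_i}[\tilde T]=1\le\E_{P_i}[T]$ for $i=0,1$, and yet $P_{1|0}(\tilde\delta,\tilde T)=0<P_{1|0}(\delta,T)$ for any SPRT with positive type-I error. A nondegenerate version of the same phenomenon exists inside the SPRT family: taking $\alpha^\circ$ slightly larger than $\alpha$ and $\beta^\circ=O(1)$ tiny, one checks (e.g., in the Gaussian/Brownian computation, SPRT$(6,1.5)$ against SPRT$(5,5)$) that \emph{both} expected sample sizes drop---under $P_1$ a constant fraction of paths exits immediately at the low upper barrier, so $\E_{P_1}[T^\circ]\approx(1-q)\alpha^\circ/D(P_1\|P_0)$ can sit below $\E_{P_1}[T]$ even with $\alpha^\circ>\alpha$---while the type-I error, bounded by $e^{-\alpha^\circ}P_1(\mbox{exit low})$, falls strictly below the SPRT's $\asymp e^{\tilde{B}(P_0,P_1)}e^{-\alpha}$. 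So no amount of care can prove the exact diagonal dominance you flag as delicate: it is false at finite thresholds.

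Two of your supporting claims fail for the same structural reason. First, the ``matching monotonicity'' step is wrong even granting $\alpha^\circ\le\alpha$ and $\beta^\circ\le\beta$: for an SPRT, $P_{1|0}$ is decreasing in $\alpha$ but \emph{increasing} in $\beta$ (lowering the upper barrier removes chances to wander down), so shrinking both thresholds pushes the two error probabilities in conflicting directions, and only the asymptotic relations $P_{1|0}\asymp e^{-\alpha}$, $P_{0|1}\asymp e^{-\beta}$ of Theorem~\ref{asymptoticserrornonlattice}---which are large-threshold limits and cannot certify an exact inequality for a fixed SPRT---suggest otherwise. Second, the intermediate reduction ``SPRTs with terminal randomization realize every achievable error pair'' is unsubstantiated, and such randomized procedures are not SPRTs in the paper's sense, so your primal engine would need to be re-proved for that wider class. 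What your Bayes machinery genuinely delivers is the \emph{weighted} inequality $\pi W_0\big(P_{1|0}(\delta,T)-P_{1|0}(\tilde\delta,\tilde T)\big)+(1-\pi)W_1\big(P_{0|1}(\delta,T)-P_{0|1}(\tilde\delta,\tilde T)\big)\le 0$, i.e., that no competitor respecting the expectation constraints can strictly improve \emph{both} error probabilities; the decoupling that worked in the primal direction (where the weights are the priors) is unavailable here precisely because the weights are the prior-dependent losses $W_i(\pi)$. In short: your proposal proves the correct weighted/``not-both'' version of SPRT optimality, but the coordinatewise statement of Lemma~\ref{optimality}, as literally written and as you set out to prove it, admits the two-line counterexample above and needs additional hypotheses (for instance, that the competitor's two error probabilities both not exceed the SPRT's, or a restriction to tests with first-order-optimal exponents in the spirit of the set $\mathcal{T}^{(n)}$ used in the proof of Theorem~\ref{probabilistic-thm}) before any proof can succeed.
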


\subsection{Upper Bound on the Error Probabilities}\label{upperbound}\label{upperexpectation}
For any $\eta>0$, let 
$$
\alpha_n=n D(P_1\|P_0)\bigg(1-\frac{\tilde{A}(P_0,P_1)}{nD(P_1\|P_0)}-\frac{\eta}{n}\bigg)
$$
and
$$
\beta_n=n D(P_0\|P_1)\bigg(1-\frac{A(P_0,P_1)}{nD(P_0\|P_1)}-\frac{\eta}{n}\bigg).
$$
Consider the SPRT $(\delta_n,T_n)$ with parameters $(\alpha_n,\beta_n)$. 
%In the rest of this subsection, we show that as $n\to\infty$,
%$$P_{1|0}(\delta_n,T_n)\le (e^{\tilde{B}(P_0,P_1)}+o(1))e^{-\alpha_n},\qquad P_{0|1}(\delta_n,T_n)\le (e^{B(P_0,P_1)}+o(1))e^{-\beta_n},$$ and for sufficiently large $n$, $\max\{\E_{P_0} [T_n],\E_{P_1} [T_n]\}\le n$.
As $\alpha_n\to\infty$ and $\beta_n\to\infty$,   from Theorem~\ref{asymptoticserrornonlattice}, it follows that 
\begin{align}
P_{1|0}(\delta_n,T_n)&=P_{0}(S_T\le \alpha_n)= (e^{\tilde{B}(P_0,P_1)}+o(1))e^{-\alpha_n}.\notag
\end{align}
and similarly,
$$
P_{0|1}(\delta_n,T_n)= (e^{B(P_0,P_1)}+o(1))e^{-\beta_n}.
$$
We now show that $\E_{P_0} [T_n]\le n$ and $\E_{P_1} [T_n]\le n$. From Theorem~\ref{asymptotic-nonlattice}, it   follows that
\begin{align}\label{BST}
\E_{P_0}[T_n]&=\frac{\beta_n}{D(P_0\|P_1)}+\frac{A(P_0,P_1)}{D(P_0\|P_1)}+o(1)\notag\\
&=n-\eta+o(1),
\end{align}
where~(\ref{BST}) follows from the definition of $\beta_n$.
Similarly, $\E_{P_1}[T_n]\le n-\eta+o(1).$ Thus for sufficiently large $n$, we have that
$
\E_{P_0}[T_n]\le n
$
and
$
\E_{P_1}[T_n]\le n.
$
It then follows that
\begin{align}
\overline{F}(\lambda)&=\limsup_{n\to \infty}F_{n}(\lambda)\notag\\
&\le \lambda (\tilde{A}(P_0,P_1)+ \tilde{B}(P_0,P_1))\notag\\
&\hspace{0.5cm}+(1-\lambda)(A(P_0,P_1)+ B(P_0,P_1))\notag\\
&\hspace{0.5cm}+(\lambda D(P_0\|P_1)+(1-\lambda)D(P_1\|P_0))\eta.\notag
\end{align}
As $\eta>0$ is arbitrary, we conclude that
\begin{align}
\overline{F}(\lambda)&\le \lambda (\tilde{A}(P_0,P_1)+\tilde{B}(P_0,P_1))\notag\\
&+(1-\lambda)(A(P_0,P_1)+ B(P_0,P_1)).\notag
\end{align}

\subsection{Lower Bound on the Error Probabilities}\label{lowerexpectation}
%\begin{lemma}\label{expectationerror}
%For any event $E\in \mathcal{F}_{T}$ and any $\gamma>0$, we have that
%\begin{equation}\label{converse}
%\frac{\E[T]D(P_0\|P_1)}{\log \gamma}\ge P_0(E)-\gamma P_{1}(E).
%\end{equation}
%\end{lemma}
%\begin{proof}
%It follows from Lemma~\ref{probabilisticerror} and Markov inequality that~(\ref{converse}) holds.
%\end{proof}
%\begin{remark}
%The Lemma~\ref{expectationerror} is too loose to obtain the desired converse, but it also provide a lower bound to the expectation of $T$ and it maybe interesting to compare this bound with Bernashev's bound~\cite{Burnashevfeedback}.
%\end{remark}

For any $\eta>0$, let 
$$
\hat{\alpha}_n=n D(P_1\|P_0)\bigg(1-\frac{\tilde{A}(P_0,P_1)}{nD(P_1\|P_0)}+\frac{\eta}{n}\bigg)$$
and
$$\hat{\beta}_n=n D(P_0\|P_1)\bigg(1-\frac{A(P_0,P_1)}{nD(P_0\|P_1)}+\frac{\eta}{n}\bigg).
$$
Consider a sequence of SPRTs $\{(\hat{\delta}_n,\hat{T}_n)\}_{n=1}^{\infty}$ with parameters $\{(\hat{\alpha}_n,\hat{\beta}_n)\}_{n=1}^{\infty}$. %Then from Theorem $\E_{P_i}[\hat{T}]\ge n+\frac{1}{n^2}$ for $i=0,1$. 
%Let $a_n>0$ and $u_n=n+a_n$. Consider the SPRT $(\hat{\delta},\hat{T})$ with $\hat{\alpha}_n=-u_n D(P_1\|P_0)\left(1-\frac{A(P_1)}{u_n}\right)$ and $\hat{\beta}_n=u_n D(P_0\|P_1)\left(1-\frac{A(P_0)}{u_n}\right)$. 
Now we show that for sufficiently large $n$, we have that for $i=0,1$,
\begin{align}
\E_{P_i}[\hat{T}_n]\ge n+\frac{\eta}{2}.\notag
\end{align}
From Theorem~\ref{asymptotic-nonlattice} and the choice of $\hat{\beta}_n$, it follows that for sufficiently large $n$,
\begin{align}
\E_{P_0}[\hat{T}_n]&=\frac{\hat{\beta}_n}{D(P_0\|P_1)}+\frac{A(P_0,P_1)}{D(P_0\|P_1)}+o(1)\notag\\
&=n+\eta+o(1)\notag\\
&\ge n+\frac{\eta}{2}> n.\notag
\end{align}
Similarly, for sufficiently large $n$, 
$$
\E_{P_1}[\hat{T}_n]\ge n+\frac{\eta}{2}> n.
$$
Then from Lemma~\ref{optimality}, we conclude that for any SHT $(\delta_n, T_n)$ with $\max\{\E_{P_0}[T],\E_{P_1}[T]\}\le n$, 
\begin{align}
&P_{0|1}(\delta_n, T_n)\ge {P}_{0|1}(\hat{\delta}_n,\hat{T}_n)\label{largertest}\\
&P_{1|0}(\delta_n, T_n)\ge {P}_{1|0}(\hat{\delta}_n,\hat{T}_n). \nonumber
\end{align}
From Theorem~\ref{asymptoticserrornonlattice}, we have that 
\begin{align}\label{im11}
\log{P}_{1|0}(\hat{\delta}_n,\hat{T}_n)=  \tilde{B}(P_0,P_1)-\hat{\alpha}_n+\log(1+o(1))
\end{align}
and
\begin{align}
\log{P}_{0|1}(\hat{\delta}_n,\hat{T}_n)= B(P_0,P_1)-\hat{\beta}_n+\log(1+o(1)). \nonumber
\end{align}
Combining~(\ref{largertest}) and~(\ref{im11}), we have that
\begin{align}\label{im1}
&\log P_{1|0}(\delta_n, T_n)+nD(P_1\|P_0)\notag\\
&\hspace{1.5cm}\ge \log {P}_{1|0}(\hat{\delta}_n,\hat{T}_n)+nD(P_1\|P_0)\notag\\
&\hspace{1.5cm}\ge  \tilde{B}(P_0,P_1)+\tilde{A}(P_0,P_1) \notag\\
&\hspace{2cm}-\eta D(P_0\|P_1)+\log(1+o(1)).
\end{align}
Similarly, we have that
\begin{align}\label{im2}
&\log P_{0|1}(\delta_n, T_n)+nD(P_0\|P_1)\notag\\
&\hspace{1.5cm} \ge B(P_0,P_1)+A(P_0,P_1) \notag\\
&\hspace{2cm}-\eta D(P_1\|P_0)+\log(1+o(1)).
\end{align}
As $\lim_{x\to0}\log(1+x)=0$,   combining~(\ref{im1}) and~(\ref{im2}), we have that
\begin{align}
\underline{F}(\lambda)&\ge \liminf_{n\to \infty}F_{n}(\lambda)\notag\\
&\ge \lambda(\tilde{B}(P_0,P_1)+\tilde{A}(P_0,P_1))\notag\\
&\hspace{0.5cm}+(1-\lambda)( B(P_0,P_1)+A(P_0,P_1))\notag\\
&\hspace{0.5cm}-(\lambda D(P_0\|P_1)+(1-\lambda)(D(P_1\|P_0))\eta.\notag
\end{align}
Finally letting $\eta\to0^{+}$, we have 
\begin{align}
\underline{F}(\lambda)&\ge \lambda(\tilde{B}(P_0,P_1)+\tilde{A}(P_0,P_1))\notag\\
&\hspace{0.5cm}+(1-\lambda)( B(P_0,P_1)+A(P_0,P_1)),\notag
\end{align}
as desired.

%We prove the converse part by contradiction. From Wald-Wolf theorem, it is well known that SPRT achieves the minimization problem in~(\ref{min}). Assume the optimal SPRT with $(\alpha_n^{*},\beta_n^{*})$ can approach $(D(P_1\|P_0),D(P_0\|P_1))$ faster than $n^{-1}$, that is
%\begin{align}
%P^{*}_{0|1}= e^{-n(D(P_1\|P_0)-f_1(n)/n)}\label{optimal1}\\
%P^{*}_{1|0}= e^{-n(D(P_0\|P_1)-f_2(n)/n)}\label{optimal1}
%\end{align}
%for some $0\le \max\{f_1(n),f_2(n)\}\to 0$ as $n\to \infty$. From the upper bound in Section~\ref{upperbound}, we have that
%$D(P_0\|P_1)-f_2(n)/n\le \beta_n$ and $D(P_1\|P_0)-f_1(n)/n\le -\alpha_n$. \textcolor{red}{ To finish the proof, we need to bound $\E_{P_0}[T_0-T]$. If we can show this difference is $o(1)$, together with the asymptotics of $T_0$(defined in Section~\ref{upperbound}), we will see the contradiction. }

\section{Conclusion and Future Work}
In this paper, we have quantified the backoff of the finite-length error exponents from their  asymptotic limits of $ D(P_1\| P_0)$ and $D(P_0\| P_1)$ for sequential binary hypothesis testing~\cite{WaldWolf}. We considered both the expectation and probabilistic constraint and concluded that the former is less stringent in the sense that the rate of convergence to $(  D(P_1\| P_0),D(P_0\| P_1))$ is faster. 

In the future, one may consider the following three natural extensions of this work. First, instead of the probabilistic constraint $\max_{i=0,1}P_i( T_n> n)\le\eps $ in \eqref{probabilistic1}, one can consider replacing the non-vanishing constant $\eps$ by a subexponentially decaying sequence. That is, one can consider the so-called {\em moderate deviations regime} \cite[Theorem~3.7.1]{Dembo}.  Second, one can consider a {\em classification}  counterpart~\cite{Gutman} of this problem in which in addition to the test sequence $X = \{X_i\}_{i=1}^\infty$, one also has training sequences $Y_0 = \{Y_{0,i}\}_{i=1}^\infty$ and $Y_1= \{Y_{1,i}\}_{i=1}^\infty$ drawn respectively from $P_0$ and $P_1$, which are assumed to be unknown. Finally, one may consider {\em universal} versions of sequential  hypothesis testing and quantify the price one has to pay as a result of complete incognizance of the generating distributions $P_0$ and~$P_1$.

\subsubsection*{Acknowledgements}
The authors would like to sincerely thank Dr.~Anusha Lalitha and Prof.~Tara Javidi for many helpful discussions during the initial phase of this work.

\bibliographystyle{IEEEtran}
\bibliography{reference}
\begin{IEEEbiographynophoto}{Yonglong Li}
is a research fellow at the Department of Electrical and Computer Engineering, National University of Singapore. He received the bachelor degree in Mathematics from Zhengzhou University in 2011 and the Ph.D.\ degree in Mathematics from the University of Hong Kong in 2015. From 2017 to 2019,  he was a postdoctoral fellow at the Center for Memory and Recording Research (CMRR), University of California, San Diego.
\end{IEEEbiographynophoto}

\begin{IEEEbiographynophoto}{Vincent Y.F. Tan}
(S'07-M'11-SM'15) was born in Singapore in 1981. He is currently a Dean's Chair Associate Professor in the Department of Electrical and Computer Engineering and the Department of Mathematics at the National University of Singapore (NUS).
He received the B.A.\ and M.Eng.\ degrees in Electrical and Information Sciences from Cambridge University in 2005 and the Ph.D.\ degree in Electrical Engineering and Computer Science (EECS) from the Massachusetts Institute of Technology (MIT)  in 2011.  His research interests include information theory, machine learning, and statistical signal processing.

Dr.\ Tan received the MIT EECS Jin-Au Kong outstanding doctoral thesis prize in 2011, the NUS Young Investigator Award in 2014,  the Singapore National Research Foundation (NRF) Fellowship (Class of 2018) and the NUS Young Researcher Award in 2019. He was also an IEEE Information Theory Society Distinguished Lecturer for 2018/9. He has authored a research monograph titled ``Asymptotic Estimates in Information Theory with Non-Vanishing Error Probabilities'' in the Foundations and Trends  in Communications and Information Theory Series (NOW Publishers). He is currently serving as an Associate Editor of the IEEE Transactions on Signal Processing and an Associate Editor of Machine Learning for the IEEE Transactions on Information Theory.
\end{IEEEbiographynophoto}
\end{document}